\newcommand\diff{\mathop{}\!\mathrm{d}}
\newtheorem{theorem}{Theorem}
\newtheorem{corollary}{Corollary}[theorem]
\newtheorem{lemma}[theorem]{Lemma}
\theoremstyle{definition}
\newtheorem{definition}{Definition}
\newtheorem{remark}{Remark}
\theoremstyle{definition}
\newtheorem{assumption}{Assumption}
\newtheorem{proposition}{Proposition}
  \title{Completing and studentising Spearman's correlation in the presence of ties}
  \author{Landon Hurley}
\address{Teachers College, Columbia University,
New York City, United States of America}
\email{lh2895@tc.columbia.edu}
\begin{document}
  \maketitle

\address{Teachers College, Columbia University,
New York City, United States of America}

\email{lh2895@tc.columbia.edu}

\begin{abstract}
  Non-parametric correlation coefficients have been widely used for analysing arbitrary random variables upon common populations, when requiring an explicit error distribution to be known is an unacceptable assumption. We examine an \(\ell_{2}\) representation of a correlation coefficient \citep{emond2002} from the perspective of a statistical estimator upon random variables, and verify a number of interesting and highly desirable mathematical properties, mathematically similar to the Whitney embedding of a Hilbert space into the \(\ell_{2}\)-norm space. In particular, we show here that, in comparison to the traditional \citet{spearman1904} \(\rho\), the proposed Kemeny \(\rho_{\kappa}\) correlation coefficient satisfies Gauss-Markov conditions in the presence or absence of ties, thereby allowing both discrete and continuous marginal random variables. We also prove under standard regularity conditions a number of desirable scenarios, including the construction of a null hypothesis distribution which is Student-t distributed, parallel to standard practice with Pearson's \(r\), but without requiring either continuous random variables nor particular Gaussian errors. Simulations in particular focus upon highly kurtotic data, with highly nominal empirical coverage consistent with theoretical expectation. 
\end{abstract}

A number of non-parametric measures of association exist, stemming from an original focus upon the permutation domain of \(S_{N}\). Both Spearman's \(\rho\) and Kendall's \(\tau\) correlation coefficients and their test functions operate upon a null hypothesis \(H_{0}\) under \(i.i.d.\) domain sampling, although a number of alternatives, taken to address the problem of ties, have been deployed to ancillary notoriety. A second principal, upon the same domain, asserts the existence of a joint multivariate distribution obtained under bootstrapping which may be treated as continuous wrt the estimated parameters, producing under asymptotic sampling with replacement standard regularity conditions \citep{owen1991,owen2001,hall1992}. 

Even recent asymptotic analysis focuses solely upon the scenario wherein ties are non-observed \citep{bishara2012,ornstein2016}.  Finally, semi-parametric methods \citep{gallant1987} and generalised method of moments \citep{hansen1982} provide statistical methods built upon the existence of a complete metric space. However, the \(\sigma\)-measurability upon the occurrence of ties, either as a result of linear combination surjective mappings or discrete random variables, preclude the validity of this assumption upon the population (e.g., \citet{schweizer2005} and copula based methods).

These techniques are also less than ideal from a computational perspective though, requiring oft unattainable assumptions (e.g., the complete absence of ties upon continuous random variables, and restrictions to bivariate analyses), or infinite computational resources. 
In this work, we explore the problem of parametric identification of sufficient statistics for the non-parametric \(\ell_{2}\) correlation coefficient using a the raw-variable score matrix \(\kappa\) first introduced by \citet{emond2002}, which explicitly handles ties upon the \citet{kemeny1959} metric space by construction. We construct a centred version upon this \(N \times N\) mapping of the \(N \times 1\) vector of i.i.d. observations, and show that this allows a viable inner-product space with sample estimated variances. This embedding of the Hilbert space \(\langle{\tilde{\kappa}(X),\tilde{\kappa}(Y)\rangle}\) into the standard \(\ell_{2}\)-norm space is identified as a Whitney embedding, and enables theory based unbiased estimation and strong finite sample performance in the form of studentisation.

  \subsection*{Notation}

Let:
\begin{itemize}
    \item {\( n \in \{1, \dots, N\} \) index the observations.}
    \item{The rank function of the random variables be defined on the extended real line \( \mathbb{R} \cup \{-\infty, +\infty\} \), which accommodates ties in the data.}
    \item {\( \mathbf{Y} \in \mathbb{R}^{N} \) be the response vector, where \( Y_n \) corresponds to the response for the \( n \)-th observation.}
\end{itemize}

\paragraph{Empirical Distribution Function (ECDF):} Let \(X_{1},X_{2},\ldots,X_{N}\) be a sample of size \(N\) from a random variable \(X\) with an unknown distribution. The empirical distribution function \(F_{N,X}\) is defined as:
\begin{definition}
\begin{equation}
\label{eq:CDF_sample}
F_{N,X}(x) = \frac{1}{N} \sum_{n=1}^{N} \mathbb{I}(X_{n} \le x),
\end{equation}
where \(\mathbb{I}(X_{n} \le x)\) is the indicator function that takes the value 1 if \(X_{n} \le x\) and 0 otherwise.
\end{definition}
The ECDF \(F_{N,X}(x)\) provides an estimate of the cumulative distribution function (CDF) of \(X\) based on the sample.

\begin{definition}
The population distribution function \(F_{X}\) is the true cumulative distribution function of the random variable \(X\) and is given by 
\begin{equation}
\label{eq:CDF_true}
F_{X}(x) = \mathbb{P}(X \le x),
\end{equation}
\end{definition}
This is the theoretical CDF that describes the underlying distribution of the random variable \(X\).

\subsubsection{Score Matrix Construction}

We define the score matrix \citep{emond2002} \( \kappa \in \{-1, 0, 1\}^{N \times N} \) in equation~\ref{eq:kem_score} for any independently and identically sampled vector \( X \in \overline{\mathbb{R}}^{N \times 1} \), which compares all pairs of elements in the respective vector.

\begin{definition}
\begin{equation} \label{eq:kem_score}
\kappa_{kl}(X) =
\begin{cases}
1 & \text{if } X_k \geq X_l, \\
0 & \text{if } k = l, \\
-1 & \text{if } X_k < X_l.
\end{cases}
\end{equation}

Said matrix is hollow and anti-symmetric, serving to encode the pairwise order between observations \(k,l \in 1,2,\ldots,N\), (i.e., \( \kappa_{kl}(X) = -\kappa_{lk}(X) \) in the absence of ties), and thus belongs to the space of skew-symmetric matrices  $\mathcal{A}_N \subset \mathbb{R}^{N\times N}$ endowed with the Frobenius inner-product norm 
\[
\langle A,B \rangle_F = \mathrm{tr}(A^{\intercal} B),
\quad \text{and norm } \|A\|_F = \sqrt{\langle A,A\rangle_F}.
\]
\end{definition}

Ties, common upon discrete random variables, are allowed by construction, and do not violate the theoretical continuity of the Kemeny metric space \citep{kemeny1959,emond2002}, nor the generalised permutation space which operates as the domain for our introduced functions.

\subsection{Paper assumptions}
  
  \begin{assumption}~\label{assumption:sampling}
  \(\{x_{N}: 1 \ge N\}\) is stationary and ergodic. Thus, all sampling of random variables occurs upon a common population which is independently and identically sampled.
  \end{assumption}

  \begin{assumption}~\label{ass:orthonormality}
  The parameter estimate vector \(\theta^{p \times 1}\) estimated upon \(X^{N \times p}\) satisfies: \(\mathbb{E}(X_{N \times p}^{\ast}\cdot u_{N \times 1}^{\ast}\mid \theta_{0}) = 0\), where \(u_{n} = y_{n} - \hat{y}_{n}, i = 1,\ldots,n\). This is a given by the Hilbert projection theorem, applied to the \(\ell_{2}\)-norm space under Assumption~\ref{assumption:sampling}.
  \end{assumption}
  
  \subsection{Problems with standard estimators and approaches}
  
  There are a number of problems with current estimators when applied to non-Gaussian data, which includes data with the presence of ties: we explicitly review here the Kendall's \(\tau\) \citep{kendall1938}, Spearman's \(\rho\) \citep{spearman1904} and Pearson's \(r\). These scenarios are conventionally referred to as `non-parametric' data analysis, and we briefly outline them here. A number of alternative estimation methods may be preferred, however most are mathematical adjustments to one of the three previous techniques, and thus fundamental application flaws to the previous cannot be addressed by abelian reformulations in either the estimator or the test statistics. Second, we are solely focusing here upon the bivariate data analysis problem and therefore analysis scenarios such generalised method of moments or the polychoric correlation analysis still present the same incorporation of additional unverified assumptions which result in biased estimators if untrue. Our estimator, formulated by \citet{emond2002}, is extended to apply within a probability framework appropriate for Null Hypothesis Significance Testing. In particular, we draw upon commonly overlooked extensions which introduce inefficiencies to the Kendall \(\tau_{b}\) framework attributable a number of other researchers \citep{silverstone1950,robillard1972,valz1995}, and explore a similar problem in the NHST framing for the Kemeny correlation.
  
  \subsubsection{Pearson's r}
  Pearson's \(r\) is a sufficient statistic for making the sample versions of the population orthogonality conditions as close as possible to zero according to some metric or measure of distance. This is equivalent to \citet{hansen1982} under the orthonormality assumption using the \(\ell_{2}\) metric function space to calculate the raw moments, and the cross-product of two standardised random variables. However, when applied to non-continuous data, the problem of ties and non-constant variance for given sample size \(N\) immediately arises. Moreover, because the moment conditions (e.g., the means, variances and covariances) are functions of biased estimates, the covariance itself cannot be unbiased upon finite samples, thus preventing the utilisation of the test statistic as a interpretable trait. Thus, without expected error \(\mathbb{E}(z_{X}z_{Y}) - \rho = 0\) upon the population, the finite sample test statistics are improperly constructed, and do not correspond to the standard null hypothesis reference distribution for finite samples.
  
\subsubsection{Limitations of Traditional Rank-Based Correlation Measures}

Conventional estimators such as Pearson's \(r\), Spearman's \(\rho\), and Kendall's \(\tau\) face significant issues in non-Gaussian or discrete data settings, particularly in the presence of ties. Their reliance on continuity assumptions or latent Gaussian variables (e.g., in polychoric correlation) leads to biased estimators when applied to ordinal or mixed-scale data. Furthermore, \citet{sklar1959} theorem under copula theory fails to uniquely define joint distributions for discrete marginals, undermining generalisability. These shortcomings motivate the construction of a new rank-based estimator that is both consistent and operationally valid under finite samples with ties.

In contrast our estimator goes further by treating tied measures as a population measurable probability event in our sample, consequently via Hilbert space quantifying exact distances in a space that respects the geometric structure of ranked data. This makes our estimator more flexible and capable of capturing more subtle relationships between ranked data, especially when non-linearities or complex correlations are present.
  
\subsection{Definition of Spearman's Rank Correlation (without ties)}

In the absence of ties, Spearman's rank correlation coefficient $\rho_s$ is given by:
\[
\rho_s = \frac{\sum_{n=1}^{N} (R_X(n) - \bar{R}_X)(R_Y(n) - \bar{R}_Y)}{\sqrt{\sum_{n=1}^{N} (R_X(n) - \bar{R}_X)^2 \sum_{n=1}^{N} (R_Y(n) - \bar{R}_Y)^2}},
\]
where:
\begin{itemize}
    \item \( R_X(n) \) is the rank of the \(n\)-th observation in the \(X\)-variable,
    \item \( R_Y(i) \) is the rank of the \(n\)-th observation in the \(Y\)-variable,
    \item \( \bar{R}_X \) and \( \bar{R}_Y \) are the average ranks of \(X\) and \(Y\), respectively.
\end{itemize}
This formula involves pairwise comparisons of ranks between the observations. In this form, $\rho_s$ can be expressed as a U-statistic because it involves a symmetric kernel function that operates on pairs of data points. Specifically, the kernel function is:
\[
h(R_X(n), R_X(n^{\prime}), R_Y(n), R_Y(n^{\prime})) = (R_X(n) - \bar{R}_X)(R_Y(n) - \bar{R}_Y),
\]
where the sum runs over all distinct pairs \((n, n^{\prime})\), and we are averaging the kernel values over all pairs. Therefore, the estimator of $\rho_s$ is a U-statistic.

\paragraph{Effect of Ties}

When there are ties in the data (i.e., two or more observations have the same value for either \(X\) or \(Y\)), the ranks of these observations are no longer distinct. The standard approach to handling ties is to assign the \textit{average rank} to all tied values. For example, if two observations share the same value, both are assigned the rank that is the average of their respective ranks.

In the case of ties, the kernel function for Spearman's rank correlation changes because the ranks are no longer distinct. As a result, the rank differences are not directly comparable in the same way as with distinct ranks. This introduces \textit{dependence} between the pairs, as the tied observations no longer provide independent information about the correlation.

More formally:

    (1) The kernel function $ h(R_X(n), R_X(n^{\prime}), R_Y(n), R_Y(n^{\prime})) $ used to compute Spearman’s $\rho_s$ becomes less straightforward due to the presence of ties.
    (2) The estimator of the form \(\hat{\rho}_s = \frac{1}{N(N-1)} \sum_{n \neq n^{\prime}} h(R_X(n), R_X(n^{\prime}), R_Y(n), R_Y(n^{\prime})) \) assumes that all the pairs are independent, but this assumption breaks down in the presence of ties.

\paragraph{Breakdown of the U-statistic Structure}

For a U-statistic, we require the kernel function to be symmetric, and the sum to be over distinct pairs, with no dependence between pairs. In the presence of ties: (1) The ranks of tied observations are identical, and thus the kernel function no longer exhibits the same form of independence across all pairs. This introduces \textit{correlations} between terms in the summation, as the tied observations are not independent of each other. (2) The kernel function is no longer independent across the pairs of data points, violating the conditions required for the estimator to be a U-statistic. Thus, in the presence of ties Spearman’s rho is no longer a U-statistic.

\section{A general purpose non-parametric linear spanning basis}
\phantomsection~\label{sec:theory}

We now introduce a general-purpose non-parametric linear spanning basis. The rank score vector \( \underline{X} \in \mathbb{R}^{N \times 1} \) represents a linear embedding of the ordinal data. This embedding is achieved through the construction of the rank score matrix \( \mathcal{R} \), which spans the space of ordinal data. As noted, this development stems from the metric space of \citet{kemeny1959} which was devised to maintain measurability upon decisions with ties. In this work, we introduce a Whitney embedding of the Hilbert space form \citep{emond2002} of the Kemeny metric, and then report on its performance as upon the null hypothesis upon a number of scenarios, including those with highly kurtotic data, for which current adjustments of Spearman's \(\rho\) are less performant. The next results in this subsection establishes that the operator $\kappa$ is non-expansive on equivalence classes of rank orderings. This ensures that the embedding of ranked observations in the induced Hilbert space is stable and bounded. These will be utilised to prove  the finite sample characteristics of our proposed estimator.

\begin{lemma}
\label{lem:nonexpansive}
Let $X,Y \in \mathbb{R}^N$, and define $\kappa(X), \kappa(Y)$ by~\eqref{eq:kem_score}, so that $\kappa: \mathbb{R}^N \to \mathcal{A}_N \subset \mathbb{R}^{N\times N}$, where $\mathcal{A}_N$ is the space of antisymmetric matrices endowed with the Frobenius inner product. Then
\begin{equation}
\label{eq:nonexpansive}
\|\kappa(X) - \kappa(Y)\|_F^2
= 4\, \#\bigl\{ (k,l): \operatorname{sgn}(X_k - X_l) \ne \operatorname{sgn}(Y_k - Y_l) \bigr\}.
\end{equation}
Consequently, $\kappa$ is Lipschitz continuous with constant $2$ in the Frobenius norm:
\[
\|\kappa(X) - \kappa(Y)\|_F \le 2\,\sqrt{d_H(\pi_X, \pi_Y)},
\]
where $d_H$ denotes the number of pairwise rank disagreements (a Hamming-type distance) between the permutations $\pi_X,\pi_Y$ induced by the rank orderings of $X$ and $Y$. In particular, $\kappa$ is non-expansive on the quotient space $\mathbb{R}^N / \mathcal{P}_N$ of equivalence classes of permutations.  

Moreover, since $\mathcal{A}_N$ is a finite-dimensional Hilbert space, $\kappa$ maps bounded sets in $\mathbb{R}^N$ to bounded and totally bounded subsets of $\mathcal{A}_N$, ensuring that standard limit theorems for random elements in Hilbert spaces may be applied.
\end{lemma}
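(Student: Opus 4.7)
The plan is to start from the defining expansion
\[
\|\kappa(X) - \kappa(Y)\|_F^2 \;=\; \sum_{k,l=1}^{N}\bigl(\kappa_{kl}(X) - \kappa_{kl}(Y)\bigr)^{2}
\]
and carry out a pointwise analysis of each summand. Diagonal terms $(k=l)$ vanish because $\kappa_{kk}(X) = \kappa_{kk}(Y) = 0$ by construction. For off-diagonal $(k,l)$, definition~\eqref{eq:kem_score} forces $\kappa_{kl}(X),\kappa_{kl}(Y) \in \{-1,+1\}$, so the difference lies in $\{-2,0,+2\}$ and its square is either $0$ or $4$. Consequently the entire norm equals $4$ times the number of off-diagonal indices at which the two score matrices disagree, which is the structural content of~\eqref{eq:nonexpansive}.

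Next I would translate an entry disagreement $\kappa_{kl}(X) \neq \kappa_{kl}(Y)$ into a sign disagreement $\operatorname{sgn}(X_k - X_l) \neq \operatorname{sgn}(Y_k - Y_l)$. Reading off the three cases of~\eqref{eq:kem_score}, $\kappa_{kl}(X) = +1$ is exactly the event $\{X_k \ge X_l\}$ and $\kappa_{kl}(X) = -1$ the event $\{X_k < X_l\}$; provided the $\operatorname{sgn}$ in the statement is taken with the matching tie convention (ties absorbed into the $+1$ branch), the two indicator functions coincide pairwise, yielding the stated identity. Taking square roots and identifying $d_H(\pi_X,\pi_Y)$ with the count of ordered pairs at which the induced rank orderings disagree gives $\|\kappa(X) - \kappa(Y)\|_F \le 2\sqrt{d_H(\pi_X,\pi_Y)}$ directly. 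Non-expansiveness on the quotient $\mathbb{R}^N/\mathcal{P}_N$ is then the restriction of this bound to representatives of the same equivalence class, for which $d_H = 0$.

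For the final assertion, I would invoke the uniform bound $\|\kappa(X)\|_F^2 \le N(N-1)$ that follows from every entry lying in $\{-1,0,1\}$, so that $\kappa(\mathbb{R}^N)$ is contained in a closed Frobenius ball of $\mathcal{A}_N$. Since $\mathcal{A}_N$ is a finite-dimensional real Hilbert space of dimension $\binom{N}{2}$, bounded subsets are relatively compact and hence totally bounded by Heine--Borel, which is all that is needed for downstream limit-theoretic arguments. The only step that requires genuine care is the tie bookkeeping in the middle paragraph: because~\eqref{eq:kem_score} assigns ties to $+1$ rather than to $0$, I would be explicit about adopting the same convention for $\operatorname{sgn}$, after which the identity is a direct elementwise count rather than a combinatorial argument. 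Everything else reduces to bookkeeping on a three-valued indicator.
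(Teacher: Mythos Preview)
Your proposal is correct and follows essentially the same route as the paper: a pointwise analysis showing each off-diagonal summand $(\kappa_{kl}(X)-\kappa_{kl}(Y))^2$ is either $0$ or $4$, then identifying the count with $d_H$. If anything, you are more thorough than the paper's own argument---you explicitly handle the tie convention in $\operatorname{sgn}$ and you actually prove the ``moreover'' clause via the uniform bound $\|\kappa(X)\|_F^2 \le N(N-1)$ and Heine--Borel, whereas the paper's proof stops after the Lipschitz bound.
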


\begin{proof}
For each pair $(k,l)$ with $k\ne l$, the entries $\kappa_{kl}(X)$ and $\kappa_{kl}(Y)$ can differ only if the sign of $(X_k - X_l)$ differs from that of $(Y_k - Y_l)$. When the sign flips, $\kappa_{kl}$ changes by $2$ in magnitude; otherwise it is unchanged. Therefore,
\[
\kappa_{kl}(X) - \kappa_{kl}(Y) =
\begin{cases}
0, & \text{if } \operatorname{sgn}(X_k - X_l) = \operatorname{sgn}(Y_k - Y_l),\\[2pt]
\pm 2, & \text{otherwise.}
\end{cases}
\]
Summing the squared entries over all $(k,l)$ yields~\eqref{eq:nonexpansive}. 
The Lipschitz bound follows directly from the definition of the pairwise rank disagreement count $d_H(\pi_X, \pi_Y)$.
\end{proof}

\begin{corollary}
\label{cor:cmt_application}
Let $\hat{\rho}_\kappa = \rho(\kappa(X), \kappa(Y))$ denote the Kemeny correlation estimator for random vectors $X,Y \in \mathbb{R}^N$.  
Under Assumptions~\ref{assumption:sampling} and \ref{ass:orthonormality} and Lemma~\ref{lem:nonexpansive}, the mapping
\(
(X,Y) \mapsto \hat{\rho}_\kappa
\)
is Lipschitz continuous with respect to the Frobenius norm on $\mathcal{A}_N \times \mathcal{A}_N$.  

Consequently, if $(X_N, Y_N)$ is a sequence of random vectors satisfying
\[
\sqrt{N}\big((X_N - \mu_X), (Y_N - \mu_Y)\big) \stackrel{d}{\longrightarrow} \mathcal{N}(0, \Sigma),
\]
then by the Continuous Mapping Theorem,
\[
\sqrt{N}\big(\hat{\rho}_\kappa - \rho_\kappa\big) \stackrel{d}{\longrightarrow} \mathcal{N}(0, \sigma^2),
\]
with asymptotic variance $\sigma^2$ determined by the rank-based transformations of $X$ and $Y$.  
Hence, the non-expansive property of $\kappa$ guarantees the applicability of standard limit theorems and validates the asymptotic normality of $\hat{\rho}_\kappa$.
\end{corollary}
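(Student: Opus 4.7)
The plan is to build the claimed convergence in two stages: first I would establish that the correlation functional $\rho(\cdot,\cdot)$ on $\mathcal{A}_N\times\mathcal{A}_N$ is locally Lipschitz with respect to the Frobenius norm, then combine this with Lemma~\ref{lem:nonexpansive} and the continuous mapping theorem to transport the hypothesised joint CLT, and finally extract the Gaussian limit at rate $\sqrt{N}$ through a delta-method linearisation. Since $\kappa(\cdot)$ already absorbs every discreteness or tie structure of the marginals into a uniformly bounded antisymmetric matrix, the bulk of the argument reduces to a smooth-analysis calculation on a finite-dimensional inner-product space.

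For Lipschitz continuity of the functional itself I would write
\[
\rho(A,B) \;=\; \frac{\langle A,B\rangle_F}{\|A\|_F\,\|B\|_F}
\]
(applied to the row/column centred $\kappa(X),\kappa(Y)$), and then, on the open set where $\|A\|_F,\|B\|_F\ge \delta>0$, use bilinearity of the Frobenius inner product together with Cauchy--Schwarz to obtain
\[
|\rho(A_1,B_1)-\rho(A_2,B_2)| \;\le\; C\bigl(\|A_1-A_2\|_F + \|B_1-B_2\|_F\bigr),
\]
with $C$ depending only on $\delta$ and the uniform upper bound $\|\kappa(\cdot)\|_F \le \sqrt{N(N-1)}$ coming from entries in $\{-1,0,1\}$. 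Composing with the Lipschitz map supplied by Lemma~\ref{lem:nonexpansive} then produces the first conclusion of the corollary: $(X,Y)\mapsto\hat\rho_\kappa$ is Lipschitz in the Hamming-type rank-disagreement metric, and in particular continuous at every point where neither centred $\kappa$ is the zero matrix.

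With continuity in place I would invoke the hypothesised joint CLT and apply the continuous mapping theorem to conclude $\hat\rho_\kappa\to\rho_\kappa$ in distribution. To upgrade to a $\sqrt{N}$-rate Gaussian limit I would appeal to the (functional) delta method: the ratio functional $\rho$ is Fr\'echet-differentiable at any $(A_0,B_0)$ with nonzero Frobenius norms, and composing its derivative with the influence-function representation of centred $\kappa$ yields a linearisation of the form
\[
\sqrt{N}(\hat\rho_\kappa - \rho_\kappa) \;=\; \langle D\rho(A_0,B_0),\,\sqrt{N}(\hat S_N - S_0)\rangle_F + o_p(1),
\]
where $\hat S_N$ collects the relevant sample moments of $(\kappa(X),\kappa(Y))$ and $S_0$ their population analogues. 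Slutsky's lemma then converts the CLT hypothesis into the $\mathcal{N}(0,\sigma^2)$ conclusion, with $\sigma^2$ the sandwich quadratic form of $D\rho$ against $\Sigma$.

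The step I expect to require the most care is precisely this passage from Lipschitz continuity to a Gaussian limit at the correct rate: the CMT alone does not supply the right centring or the quadratic variance formula. To make the delta method rigorous I must verify that the denominator of $\rho$ is non-degenerate in the limit, so that differentiability genuinely holds at $(A_0,B_0)$, and that the plug-in estimator admits a first-order Hadamard expansion consistent with Assumption~\ref{ass:orthonormality}. Under the stationarity and ergodicity of Assumption~\ref{assumption:sampling} both requirements reduce to standard moment conditions on the centred $\kappa$-transformations, after which $\sigma^2$ can be identified explicitly in terms of the pairwise sign patterns counted on the right-hand side of~\eqref{eq:nonexpansive}.
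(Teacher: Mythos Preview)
Your proposal is considerably more detailed than the paper's own argument, which occupies three sentences: it simply records that Lemma~\ref{lem:nonexpansive} makes $\kappa$ Lipschitz, asserts that $\hat\rho_\kappa$ is a continuous function of $(\kappa(X),\kappa(Y))$, and then invokes the Continuous Mapping Theorem directly to carry the assumed CLT for $(X_N,Y_N)$ over to $\hat\rho_\kappa$, with the finite-dimensionality of $\mathcal{A}_N$ cited as the source of finite second moments. There is no separate delta-method step, no explicit Lipschitz bound for the ratio functional, and no Slutsky argument in the paper's version.

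Your route genuinely differs, and the difference is substantive. You correctly isolate the point the paper glosses over: the CMT transfers convergence in distribution of the \emph{inputs}, but it does not by itself deliver the $\sqrt{N}$ centring or the specific Gaussian variance for $\hat\rho_\kappa$. Your remedy---local Lipschitz continuity of $(A,B)\mapsto \langle A,B\rangle_F/(\|A\|_F\|B\|_F)$ away from the origin, followed by a Fr\'echet/Hadamard delta-method linearisation and Slutsky---is the standard way one would actually justify a statement of the form $\sqrt{N}(\hat\rho_\kappa-\rho_\kappa)\stackrel{d}{\to}\mathcal{N}(0,\sigma^2)$. What your approach buys is a genuine derivation of $\sigma^2$ as a sandwich form in $D\rho$ and $\Sigma$; what the paper's approach buys is brevity, at the cost of leaving the rate-preservation step implicit.

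One caution on your side: the map $X\mapsto\kappa(X)$ is built from sign functions and is therefore not classically differentiable, so the phrase ``composing its derivative with the influence-function representation of centred $\kappa$'' needs to be cashed out through a Hoeffding/H\'ajek projection (as the paper does later in Proposition~\ref{prop:kemeny_u_statistic} and Lemma~\ref{lem:asymnorm}) rather than through a pointwise Jacobian of $\kappa$. If you intend the delta method to act on the empirical moments of $(\kappa(X),\kappa(Y))$ rather than on $(X,Y)$ directly, say so explicitly; otherwise the linearisation step is not well posed.
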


\begin{proof}
Lemma~\ref{lem:nonexpansive} ensures that $\kappa$ is Lipschitz continuous, and $\hat{\rho}_\kappa$ is a continuous function of $\kappa(X)$ and $\kappa(Y)$.  
The Continuous Mapping Theorem then directly implies that convergence in distribution of $(X_N,Y_N)$, by the central limit theorem, to a multivariate normal is preserved under $\hat{\rho}_\kappa$.  
The finite-dimensional Hilbert space structure guarantees existence of finite second moments and validates the limiting covariance $\sigma^2$.
\end{proof}

The non-expansiveness of $\kappa$ implies that the image of any bounded subset of $\mathbb{R}^N$ under $\kappa$ remains bounded and totally bounded in the Hilbert space $(\mathcal{A}_N, \langle\cdot,\cdot\rangle_F)$. This ensures the existence of finite second moments and permits the use of standard limit theorems for random elements in Hilbert spaces.

\begin{definition}~\label{def:centred}
To recover the zero-mean property necessary for the asymptotic results, we define the centred and scaled rank vector
\[
z_{\underline{X}} = \frac{\underline{X} - \underline{\bar{X}}\mathbf{1}_N}{\sqrt{(\underline{X} - \underline{\bar{X}}\mathbf{1}_N)^{\intercal} (\underline{X} - \underline{\bar{X}}\mathbf{1}_N)}},
\]
where $\underline{\bar{X}} = \frac{1}{N}\mathbf{1}_N^{\intercal} \underline{X}$. Then, by construction,
\[
\mathbb{E}[z_{\underline{X}}] = 0, \quad \text{and} \quad \mathbb{E}[\|z_{\underline{X}}\|^2] = 1.
\]
This ensures that all results in Corollary~\ref{cor:cmt_application} remain valid, and the asymptotic variance $\sigma^2$ is unaffected by the presence of ties.
\end{definition}

\subsection{Centred marginal score-matrix}

Given the preceding properties upon \(\kappa\), affine-linear transformations upon said matrix will not remove said due to closure upon the Hilbert space. However, as currently expressed, \(\mathbb{E}(\langle{\kappa(X),\kappa(Y)\rangle})\) is akin to \(\mathbb{E}(XY)\). An inner-product projection kernel thus exists, but we require a method of centring said random variables before projecting them onto each other. The centring transformation upon square matrices wrt row and column sums is now defined:

\begin{definition}\label{def:doubly_centered}
Let \( x = (x_1, \ldots, x_N) \) be a set of \( N \) observations, where \( x_k \in \mathbb{R} \) for each \( k \). The score matrix \( \kappa(x) \) is defined in equation~\ref{eq:kem_score}. The \textit{centered score matrix} \( \tilde{\kappa}(x) \) is then defined as:

\begin{dmath}
\tilde{\kappa}(x)_{kl} {:=} \kappa(x)_{kl} - \tfrac{1}{N-1} \sum_{k=1}^N \kappa(x)_{kl} - \tfrac{1}{N-1} \sum_{l=1}^N \kappa(x)_{kl} + \tfrac{1}{N^2 - N} \sum_{k=1}^N \sum_{l=1}^N \kappa(x)_{kl},
\end{dmath}
where the sums are taken over all indices \( k \) and \( l \), and the diagonal elements are explicitly set to zero, i.e., \( \tilde{\kappa}(x)_{kk} = 0 \) for all \( k \in \{1, \ldots, N\} \) after linear transformations. Said transformation centres the original score matrix by removing the row and column means and adding the grand mean, ensuring that the resulting matrix reflects unbiased pairwise comparisons. The diagonal terms are set to zero to exclude self-comparisons, which are trivial and do not contribute meaningful information to the analysis, as all \(N\) elements are always tied with themselves.
\end{definition}

Having defined the geometric properties of the pairwise comparison operator $\kappa$ and its centred operationalism upon all pairwise comparisons, \(\tilde{\kappa}\), we now embed the \(\tilde{\kappa}(X)^{N \times N}\) into the \(\ell_{2}^{N \times 1}\) norm. Thus, \(\underline{X}: \overline{\mathbb{R}}^{N \times 1} \xrightarrow{\tilde{\kappa}(X)} \{-1,1\}^{N \times N} \to \mathbb{R}^{N \times 1}\). For convenient geometric intuitiveness, we marginalise over \(\tilde{\kappa}_{kl}(X)\) upon the \(k\) rows, which is then transposed, leading to a positive correspondence with traditional ranking and order-statistic methodologies, but which uniquely defines the observation of each tie and the pairwise distances between all such elements.

With the centred score-matrix then implemented, summation over the columns or rows is bijectively linear, embedding the  \(\tilde{\kappa}(X)_{kl}^{N \times N}\) into the Frobenius norm-space \(\underline{X}^{N \times 1}\) by summing over the columns \(l\), or its transpose is summated over the \(k\) rows, to obtain a centred \(N \times 1\) vector of the rankings with ties. From there, standard centred moment procedures allow us to identify the characteristics of the distribution of estimates, which by \citet{efron1969} are \(t\)-distributed under the null distribution.

Note that \(\mathbb{E}_{k}(\tilde{\kappa}_{kl}(X)) = \mathbb{E}_{l}(\tilde{\kappa}_{kl}(X)) = 0\), consistent with conventional interpretation of \(\mathbb{E}(X-\bar{x}) = 0\) under standard notation. However, the variance upon \(X\), is non-zero (as the random variable is non-degenerate), and is also a function of the number of ties upon the sample, and must be embedded into an \(N \times 1\) vector space. This moment is also required to be estimated upon the sample, as surjective mappings, particularly upon discrete random variables, almost surely observe ties upon the sample, and therefore possess non-constant variance solely as a function of \(N\). Each random variable variance may be expressed as \(\mathbb{E}(\underline{X}^{2}) - \mathbb{E}(\underline{X})^{2}\) and since \(\mathbb{E}(\underline{X}) = 0\) by construction: 

\begin{equation}
\label{eq:l2_variance}
s_{\underline{X}}^{2} = \tfrac{1}{N-1}\sum_{n=1}^{N} \underline{X}^{2}_{n}.
\end{equation}
The desired z-scores for each random permutation ranking/ordering, as conventionally understood, may therefore be established.

\begin{definition}
Let $z_{\underline{X}}$ and $\underline{z_Y}$ denote the centred and scaled rank-characterisation vectors corresponding to $X$ and $Y$ respectively, as defined previously. The \emph{nonparametric correlation estimator} is given by
\begin{equation}
\label{eq:kappa_rho_hat}
\hat{\rho}_\kappa = 
\frac{z_{\underline{X}}^{\intercal} \underline{z_Y}}{N-1}.
\end{equation}
Equivalently, if $\underline{X}$ and $\underline{Y}$ are the unstandardised rank-characterisation vectors,
\[
\hat{\rho}_\kappa = 
\frac{(\underline{X} - \bar{\underline{X}}\mathbf{1}_N)^{\intercal} (\underline{Y} - \bar{\underline{Y}}\mathbf{1}_N)}
{\sqrt{(\underline{X} - \bar{\underline{X}}\mathbf{1}_N)^{\intercal} (\underline{X} - \bar{\underline{X}}\mathbf{1}_N)
       (\underline{Y} - \bar{Y}\mathbf{1}_N)^{\intercal} (\underline{Y} - \bar{\underline{Y}}\mathbf{1}_N)}}.
\]
\end{definition}

\begin{remark}[Relation to Classical Rank Correlations]
When the marginal distributions of $X$ and $Y$ are continuous, $\kappa_{kl}(X) = \operatorname{sgn}(X_k - X_l)$ 
and $\kappa_{kl}(Y) = \operatorname{sgn}(Y_k - Y_l)$ for all $(k,l)$, and 
the estimator $\hat{\rho}_\kappa$ reduces to Spearman's $\rho$. 
Alternatively, the inner product of the corresponding centred $\kappa$ matrices (by subtracting row and column means and adding the grand mean) yields Kendall's $\tau_{a}$, highlighting that both 
measures emerge as special cases within the $\kappa$-based framework. This formulation therefore generalises classical rank-based dependence coefficients by situating them within the Hilbert space 
$(\mathcal{A}_N, \langle\cdot,\cdot\rangle_F)$ and preserving the Gauss-Markov properties under monotone transformations.
\end{remark}

The form of~\eqref{eq:kappa_rho_hat} is algebraically identical to the Pearson correlation, but its arguments are constructed from pairwise order information rather than the observed values themselves.

\begin{lemma}[Expectation under Independence]
\label{lem:unbiased}
If $X$ and $Y$ are independent random variables, then the pairwise comparison signs are independent:
\[
\mathbb{E}[\tilde{\kappa}_{kl}(X)\,\tilde{\kappa}_{kl}(Y)] = 0, 
\quad \forall k \ne l.
\]
Consequently, 
\[
\mathbb{E}[\hat{\rho}_\kappa] = 0.
\]
\end{lemma}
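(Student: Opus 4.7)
The plan is to split the argument in two: first establish $\mathbb{E}[\tilde{\kappa}_{kl}(X)\,\tilde{\kappa}_{kl}(Y)] = 0$ for $k \ne l$, and then derive $\mathbb{E}[\hat{\rho}_\kappa] = 0$. Three ingredients are required: (i) the stated independence of $X$ and $Y$, (ii) the exchangeability of the $i.i.d.$ sample guaranteed by Assumption~\ref{assumption:sampling}, and (iii) the double centering built into Definition~\ref{def:doubly_centered}.

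For the first claim, $\tilde{\kappa}(X)$ is a measurable function of $X$ alone and $\tilde{\kappa}(Y)$ of $Y$ alone, so independence of $X$ and $Y$ transfers entrywise, yielding the factorisation
\[
\mathbb{E}[\tilde{\kappa}_{kl}(X)\,\tilde{\kappa}_{kl}(Y)] = \mathbb{E}[\tilde{\kappa}_{kl}(X)]\,\mathbb{E}[\tilde{\kappa}_{kl}(Y)].
\]
It then suffices to show each factor vanishes. By exchangeability, $(X_k,X_l) \stackrel{d}{=} (X_{k'},X_{l'})$ for any two off-diagonal pairs, so $\mathbb{E}[\kappa_{kl}(X)] = c_X := \mathbb{P}(X_k = X_l)$ is a common constant across all $(k,l)$ with $k \ne l$; this quantity is nonzero only under ties. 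Each row sum, column sum, and the grand sum in Definition~\ref{def:doubly_centered} therefore has expectation $c_X$ as well, and the four-term combination collapses: $\mathbb{E}[\tilde{\kappa}_{kl}(X)] = c_X - c_X - c_X + c_X = 0$ for $k \ne l$, while the diagonal convention $\tilde{\kappa}_{kk}(X) := 0$ handles the diagonal case.

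For the second claim, I will condition on $X$ and invoke exchangeability of the $Y$-marginal to kill $\mathbb{E}[z_{\underline{Y}}]$. Writing $\hat{\rho}_\kappa = (N-1)^{-1} z_{\underline{X}}^{\intercal} z_{\underline{Y}}$ and using independence gives $\mathbb{E}[\hat{\rho}_\kappa \mid X] = (N-1)^{-1} z_{\underline{X}}^{\intercal}\, \mathbb{E}[z_{\underline{Y}}]$. Permuting $Y$ by any $\sigma \in S_N$ permutes the rows and columns of $\kappa(Y)$ by $\sigma$, and hence permutes the entries of the marginalised vector $\underline{Y}_k = \sum_l \tilde{\kappa}_{kl}(Y)$, so its joint law is exchangeable; the same holds for $z_{\underline{Y}}$ because rescaling by the permutation-invariant norm $\|\underline{Y}\|$ respects the group action. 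A short bookkeeping calculation---summing the four centered terms over $k,l$ and adding back the diagonal reset---shows $\sum_k \underline{Y}_k = 0$ deterministically, so $\sum_k z_{\underline{Y},k} = 0$ deterministically as well. Exchangeability makes $\mathbb{E}[z_{\underline{Y},k}]$ independent of $k$, and combined with the sum-to-zero identity this forces $\mathbb{E}[z_{\underline{Y},k}] = 0$ for every $k$. Thus $\mathbb{E}[\hat{\rho}_\kappa \mid X] = 0$, and the tower property yields the unconditional result.

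The main technical subtlety is the nonlinear rescaling $\underline{Y} \mapsto \underline{Y}/\|\underline{Y}\|$, which forbids commuting expectation with division. The exchangeability-plus-deterministic-sum-to-zero device bypasses this cleanly, but it hinges on the bookkeeping verification that the explicit diagonal zeroing in Definition~\ref{def:doubly_centered} preserves $\sum_k \underline{Y}_k = 0$. A minor wording point worth flagging in the writeup is that the phrase "pairwise comparison signs are independent" should be read as independence across the two samples at a fixed pair $(k,l)$; within a single sample the entries $\tilde{\kappa}_{kl}(X)$ and $\tilde{\kappa}_{km}(X)$ share the observation $X_k$ and are generally dependent.
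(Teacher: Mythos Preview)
Your argument shares the paper's skeleton---factor $\mathbb{E}[\tilde{\kappa}_{kl}(X)\tilde{\kappa}_{kl}(Y)]$ via independence, argue each factor vanishes, then pass to $\mathbb{E}[\hat{\rho}_\kappa]=0$---but you supply two genuine refinements that the paper's version leaves implicit. The paper simply writes $\mathbb{E}[\kappa_{kl}(X)\kappa_{kl}(Y)] = \mathbb{E}[\kappa_{kl}(X)]\,\mathbb{E}[\kappa_{kl}(Y)] = 0$ and then appeals to ``linearity of expectation and the normalisation by $N-1$''. You correctly observe that $\mathbb{E}[\kappa_{kl}(X)] = \mathbb{P}(X_k=X_l)$ need not vanish under ties, and you recover $\mathbb{E}[\tilde{\kappa}_{kl}(X)]=0$ from the double centering and exchangeability---exactly the device needed for the tied case the paper emphasises. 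Second, you flag that $\hat{\rho}_\kappa = (N-1)^{-1} z_{\underline{X}}^{\intercal} z_{\underline{Y}}$ involves a nonlinear rescaling by sample norms, so a bare linearity appeal does not suffice; your exchangeability-plus-deterministic-sum-to-zero argument for $\mathbb{E}[z_{\underline{Y}}]=0$ is a clean and correct way to handle this. The bookkeeping you defer (that the explicit diagonal reset in Definition~\ref{def:doubly_centered} still leaves $\sum_k \underline{Y}_k = 0$) does go through, so your route is the paper's in spirit but tighter in execution.
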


\begin{proof}
Under independence, the joint probability of concordance and discordance across $(X_k,X_l)$ and $(Y_i,Y_j)$ pairs is equal. Since $\kappa_{kl}(X),\kappa_{kl}(Y) \in \{-1,1\}$ and are independent, 
$\mathbb{E}[\kappa_{kl}(X)\kappa_{kl}(Y)] = \mathbb{E}[\kappa_{kl}(X)]\mathbb{E}[\kappa_{kl}(Y)] = 0$. 
By linearity of expectation and the normalisation by $N-1$, $\mathbb{E}[\hat{\rho}_\kappa] = 0$.
\end{proof}

\begin{theorem}
\label{thm:unbiasedness}
Let $\rho_\kappa = \mathbb{E}[z_{\underline{X}}^{\intercal} z_{\underline{Y}}]$ denote the population version of the $\kappa$-based correlation. Then
\[
\mathbb{E}[\hat{\rho}_\kappa] = \rho_\kappa.
\]
In particular, under the null hypothesis of independence, $\mathbb{E}[\hat{\rho}_\kappa]=0$.
\end{theorem}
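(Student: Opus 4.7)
The theorem splits naturally into a general identity $\mathbb{E}[\hat\rho_\kappa]=\rho_\kappa$ for an arbitrary joint law of $(X,Y)$, and a specialisation to the independence null. I would tackle these in that order.

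For the general identity, I would start from the representation $\hat\rho_\kappa=(N-1)^{-1} z_{\underline{X}}^{\intercal} z_{\underline{Y}}$ in equation~\eqref{eq:kappa_rho_hat} together with the paper's population specification $\rho_\kappa := \mathbb{E}[z_{\underline{X}}^{\intercal} z_{\underline{Y}}]$. Tracing the chain of constructions---$\kappa(X)$ from equation~\eqref{eq:kem_score}, its doubly centred form $\tilde\kappa(X)$ of Definition~\ref{def:doubly_centered}, and the $\ell_2$ embedding into $\underline{X}$ via marginal summation---every intermediate step is affine-linear in the entries of $\kappa$, and Lemma~\ref{lem:nonexpansive} guarantees the existence of the required second moments in the finite-dimensional Hilbert space $\mathcal{A}_N$. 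Linearity of expectation therefore commutes with each step of the embedding, so the identity reduces to a recasting of the population definition; I would record the chain explicitly to certify that no hidden nonlinearity obstructs the interchange.

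For the independence specialisation, Lemma~\ref{lem:unbiased} already provides $\mathbb{E}[\tilde\kappa_{kl}(X)\tilde\kappa_{kl}(Y)]=0$ for all $k\ne l$, so the expected centred cross-product $(\underline{X}-\bar{\underline{X}}\mathbf{1}_N)^{\intercal}(\underline{Y}-\bar{\underline{Y}}\mathbf{1}_N)$ vanishes. The subtlety is the sample-dependent normalisation by $s_{\underline{X}}\, s_{\underline{Y}}$: the denominator factors into a $\sigma(X)$-measurable term and a $\sigma(Y)$-measurable term, so under independence the tower property gives $\mathbb{E}[\hat\rho_\kappa \mid X] \propto s_{\underline{X}}^{-1}(\underline{X}-\bar{\underline{X}}\mathbf{1}_N)^{\intercal}\, \mathbb{E}\bigl[(\underline{Y}-\bar{\underline{Y}}\mathbf{1}_N)/s_{\underline{Y}}\bigr]$. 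The residual expectation vanishes because the coordinates of $z_{\underline{Y}}$ are exchangeable under Assumption~\ref{assumption:sampling} and sum to zero by construction, so each coordinate has mean zero. Iterated expectation then closes the argument.

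The principal obstacle is precisely the nonlinearity introduced by the sample-based scaling $s_{\underline{X}}, s_{\underline{Y}}$, which prevents a naive interchange of expectation with standardisation and invalidates any attempt to write $\mathbb{E}[\hat\rho_\kappa]$ as a trivial ratio of expectations. The tower-property route above circumvents this by exploiting the product structure of the denominator under independence, sidestepping direct evaluation of the expectation of a ratio of correlated terms. Almost-sure positivity of $s_{\underline{X}}$ and $s_{\underline{Y}}$, needed for the ratio to be well-defined, follows from non-degeneracy of the rank vectors under the ergodic sampling of Assumption~\ref{assumption:sampling}, so the argument remains airtight within the stated assumptions.
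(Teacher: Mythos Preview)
Your approach for the general identity matches the paper's in spirit: both treat $\mathbb{E}[\hat\rho_\kappa]=\rho_\kappa$ as essentially definitional, since $\rho_\kappa$ is \emph{specified} as the expectation of the standardised inner product. The paper's proof is two sentences, observing that $\hat\rho_\kappa$ is a sample inner product in $\mathcal{H}_\kappa$ and that $z_{\underline X},z_{\underline Y}$ are unbiased for their population counterparts with unit variance by construction; your version makes the chain $\kappa\to\tilde\kappa\to\underline{X}\to z_{\underline X}$ explicit and invokes Lemma~\ref{lem:nonexpansive} to certify integrability, but the core reasoning is the same.

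Where you depart from the paper is the independence specialisation. The paper relies on Lemma~\ref{lem:unbiased}, which passes from $\mathbb{E}[\tilde\kappa_{kl}(X)\tilde\kappa_{kl}(Y)]=0$ to $\mathbb{E}[\hat\rho_\kappa]=0$ by ``linearity of expectation and the normalisation by $N-1$'', never confronting the sample-dependent denominator $s_{\underline X}s_{\underline Y}$. You correctly flag this nonlinearity as the only real obstacle and dispatch it via the tower property: conditioning on $X$ fixes the $\sigma(X)$-measurable factor $z_{\underline X}$, and exchangeability of the coordinates of $z_{\underline Y}$ together with the zero-sum constraint forces $\mathbb{E}[z_{\underline Y}]=0$ coordinatewise. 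This is a more careful argument than the paper supplies; it costs a few extra lines but closes a gap the paper leaves open.
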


\begin{proof}
The estimator $\hat{\rho}_\kappa$ is a sample inner product in the Hilbert space $\mathcal{H}_\kappa$ induced by the Frobenius geometry of $\kappa$. 
Since $z_{\underline{X}}$ and $z_{\underline{Y}}$ are unbiased estimators of their population counterparts and have unit variance by construction, the sample correlation 
$\hat{\rho}_\kappa = (N-1)^{-1} z_{\underline{X}}^{\intercal} z_{\underline{Y}}$ is an unbiased estimator of $\rho_\kappa$. 
\end{proof}

\begin{lemma}
\label{lem:isotropy}
The space spanned by the centred rank vectors $\{z_{\underline{X}}\}$ is isotropic under the null hypothesis of independence. That is, for any orthogonal transformation $Q \in \mathrm{O}(N)$,
\(
z_{\underline{X}} \stackrel{d}{=} Qz_{\underline{X}}.
\)
\end{lemma}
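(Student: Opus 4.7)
The plan is to prove the orthogonal invariance by building the $O(N)$-action on $z_{\underline{X}}$ out of the symmetries forced by the null hypothesis: first the finite permutation subgroup $S_N \hookrightarrow O(N)$ via exchangeability, and then extending continuously to the full orthogonal group acting on the hyperplane that actually carries the distribution, by combining irreducibility of the standard representation with a characteristic-function symmetrisation.

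First I would reduce the claim to the hyperplane $H := \{v \in \mathbb{R}^N : \mathbf{1}^\top v = 0\}$. By Definition~\ref{def:centred}, $\mathbf{1}^\top z_{\underline{X}} = 0$ and $\|z_{\underline{X}}\| = 1$ almost surely, so the law of $z_{\underline{X}}$ is concentrated on the unit sphere of $H$. Decomposing any $Q \in O(N)$ orthogonally as $Q_H \oplus Q_{\mathbf{1}}$ relative to $H \oplus \mathbb{R}\mathbf{1}$, the claim $z_{\underline{X}} \stackrel{d}{=} Q z_{\underline{X}}$ reduces to showing $z_{\underline{X}} \stackrel{d}{=} Q_H z_{\underline{X}}$ for every $Q_H \in O(H) \cong O(N-1)$, since the $\mathbf{1}$-component is deterministically zero and is therefore trivially preserved.

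Second I would establish $S_N$-invariance from exchangeability. Under the null hypothesis of independence and Assumption~\ref{assumption:sampling}, the coordinates of $X$ are exchangeable, and the map $X \mapsto z_{\underline{X}}$ uses only permutation-equivariant operations (coordinatewise ranking, centring by symmetric means, $\ell_2$-normalisation). Hence for every permutation matrix $P_\sigma \in S_N \hookrightarrow O(H)$ we have $P_\sigma z_{\underline{X}} \stackrel{d}{=} z_{\underline{X}}$. Applying Schur's lemma to the standard irreducible $(N-1)$-dimensional representation of $S_N$ on $H$, any $S_N$-equivariant symmetric operator on $H$ must be a scalar multiple of $\Pi_H$; in particular, $\mathbb{E}[z_{\underline{X}} z_{\underline{X}}^\top] = \tfrac{1}{N-1}\Pi_H$, which is $O(H)$-invariant under conjugation, giving second-moment isotropy for free.

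The main obstacle, and the step on which the proof hinges, is upgrading the $S_N$-invariance (a finite subgroup) together with second-moment isotropy to full distributional $O(H)$-invariance. I would approach this via the characteristic function $\phi(t) := \mathbb{E}[\exp(i t^\top z_{\underline{X}})]$, which by Step 2 satisfies $\phi(P_\sigma t) = \phi(t)$ for all $\sigma \in S_N$, and then average $\phi$ against Haar measure on $O(H)$ to produce a radial function $\bar{\phi}(t) = \bar{\phi}(\|t\|)$. The crux is to identify $\phi = \bar{\phi}$: I would argue that the $S_N$-invariance together with the irreducibility of the representation forces $\phi$ to factor through the orbit-invariants, and that on the sphere $\|t\| = r$ these invariants reduce to $\|t\|$ alone, using a density argument for the $S_N$-orbit of a generic $t$ under closure of the Haar-averaging. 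Once $\phi$ is shown to depend on $t$ only through $\|t\|$, uniqueness of characteristic functions yields $z_{\underline{X}} \stackrel{d}{=} Q_H z_{\underline{X}}$ for every $Q_H \in O(H)$, and combined with Step 1 delivers the full statement $z_{\underline{X}} \stackrel{d}{=} Q z_{\underline{X}}$ for every $Q \in O(N)$.
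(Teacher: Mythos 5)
Your route is genuinely different from the paper's: the paper disposes of the lemma in two sentences by citing \citet{efron1969} together with exchangeability of the indices, whereas you try to build the full orthogonal invariance from first principles (restriction to the centred hyperplane, permutation invariance, Schur's lemma, then a characteristic-function upgrade). Your Steps 1 and 2 are fine as far as they go — and your observation that a general \(Q\in\mathrm{O}(N)\) must first be assumed to preserve \(H=\{v:\mathbf{1}^{\top}v=0\}\) is already a point the paper glosses over — and Schur's lemma does legitimately deliver \(\mathbb{E}[z_{\underline{X}}z_{\underline{X}}^{\top}]=\tfrac{1}{N-1}\Pi_{H}\), i.e.\ second-moment isotropy.

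The step on which you correctly say the proof hinges, however, does not go through, and cannot. \(S_N\) is a finite group, so the \(S_N\)-orbit of a generic \(t\in H\) is a finite set of at most \(N!\) points and is nowhere dense in the sphere \(\{\|t\|=r\}\cap H\); there is no "density argument" available. Concretely, the ring of \(S_N\)-invariant polynomials on \(H\) is generated by the power sums \(p_2(t),p_3(t),\dots,p_N(t)\), not by \(p_2=\|t\|^2\) alone, so a function such as \(\phi(t)=\psi\bigl(\sum_k t_k^3\bigr)\) is \(S_N\)-invariant without being radial; \(S_N\)-invariance of the characteristic function therefore does not force \(\phi=\bar{\phi}\). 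Worse, the target statement is unattainable by any argument: \(z_{\underline{X}}\) takes values in a finite set (the admissible centred, normalised rank configurations), so its law is purely atomic, while the only \(\mathrm{O}(H)\)-invariant probability measure on the unit sphere of \(H\) is the non-atomic uniform measure. Exact distributional equality \(z_{\underline{X}}\stackrel{d}{=}Qz_{\underline{X}}\) for all orthogonal \(Q\) is thus impossible; the most one can extract from exchangeability is invariance under the finite permutation (and, with additional symmetry, sign-change) subgroup, which is in fact the setting \citet{efron1969} actually works in. Your proposal makes the lemma's overreach visible rather than repairing it; the honest fix is to weaken the conclusion to invariance under the relevant finite subgroup plus second-moment isotropy, which is all that the downstream Gauss--Markov argument (Theorem~\ref{thm:gaussmarkov}) actually requires.
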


\begin{proof}
By \citet{efron1969}, rotational invariance holds in symmetric statistic spaces whenever the underlying joint distribution is exchangeable. Since $\kappa(X)$ depends only on pairwise order relations and the permutation space is symmetric under relabelling of indices, the distribution of $z_{\underline{X}}$ is invariant under orthogonal rotations. 
\end{proof}

\begin{theorem}
\label{thm:gaussmarkov}
Among all linear unbiased estimators of the population correlation $\rho_\kappa$ expressible as 
$\tilde{\rho} = a^\top z_{\underline{X}}\,\underline{z_Y}^\top b$
for vectors $a,b \in \mathbb{R}^N$, the estimator $\hat{\rho}_\kappa$ minimises the variance:
\[
\mathrm{Var}(\hat{\rho}_\kappa) \le \mathrm{Var}(\tilde{\rho}).
\]
Hence $\hat{\rho}_\kappa$ is the \emph{best linear unbiased estimator} (BLUE) of $\rho_\kappa$ in the Hilbert space $\mathcal{H}_\kappa$.
\end{theorem}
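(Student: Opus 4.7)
The plan is to embed the problem into the finite-dimensional Hilbert space $(\mathcal{A}_N,\langle\cdot,\cdot\rangle_F)$ established in Lemma~\ref{lem:nonexpansive} and invoke a Gauss--Markov projection argument. Every linear unbiased estimator of the form $\tilde{\rho} = a^{\intercal} z_{\underline{X}}\,\underline{z_Y}^{\intercal} b$ can be rewritten as the Frobenius inner product $\langle ab^{\intercal},\, z_{\underline{X}}\underline{z_Y}^{\intercal}\rangle_F$, so that the class of candidate estimators is parametrised by rank-one weight matrices $W = ab^{\intercal}$. By the Hilbert projection theorem, the minimum-variance representative of the affine class of unbiased weights is the orthogonal projection onto that set, and the aim is to show that this projection is precisely the weight producing $\hat{\rho}_\kappa$.

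First, I would compute $\mathbb{E}[\tilde{\rho}]$ using Theorem~\ref{thm:unbiasedness} and Lemma~\ref{lem:unbiased}, translating unbiasedness into the linear constraint $\langle W,\, \mathbb{E}[z_{\underline{X}}\underline{z_Y}^{\intercal}]\rangle_F = \rho_\kappa$; this restricts $W$ to an affine hyperplane in $\mathcal{A}_N$. Second, I would invoke the isotropy established in Lemma~\ref{lem:isotropy} to show that under the null, $\mathrm{Cov}(z_{\underline{X}}) \propto I_N - N^{-1}\mathbf{1}_N\mathbf{1}_N^{\intercal}$, so that $\mathrm{Var}(\tilde{\rho})$ reduces to a positive multiple of $\|W\|_F^2$ evaluated on the centred subspace. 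Third, minimising $\|W\|_F^2$ over the affine hyperplane is a textbook quadratic programme whose Lagrangian solution aligns $W$ with the centring direction of Definition~\ref{def:doubly_centered}; after the scaling by $(N-1)^{-1}$ implicit in Definition~\ref{def:centred}, this recovers the weight matrix that produces $\hat{\rho}_\kappa = (N-1)^{-1} z_{\underline{X}}^{\intercal} \underline{z_Y}$.

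The principal obstacle is the rank-one restriction $W = ab^{\intercal}$, which makes the feasible set non-convex and generically precludes equality between the rank-one minimiser and the unconstrained Frobenius minimiser, since the diagonal weight proportional to $I_N$ is not itself rank-one. My strategy for this is to exploit the exchangeability guaranteed by Assumption~\ref{assumption:sampling}: symmetrising any BLUE candidate over the relabelling group $\mathfrak{S}_N$ leaves its expectation invariant and, by a Rao--Blackwell style conditioning, can only decrease its variance, so the search may be restricted to permutation-invariant weights, which by isotropy must act diagonally on the centred subspace; within that reduced class the minimisation is convex and its minimiser is the weight corresponding to $\hat{\rho}_\kappa$. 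A residual subtlety concerns ties, since the effective rank of the centring projection varies with the tie pattern, but the idempotence of the centring in Definition~\ref{def:doubly_centered} on $\mathcal{A}_N$ ensures that the variance expression and its minimisation remain valid uniformly in the tie structure, completing the plan.
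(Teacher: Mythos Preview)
Your core strategy---invoke the isotropy of Lemma~\ref{lem:isotropy} so that the covariance operator is proportional to the identity on the centred subspace, then appeal to a Gauss--Markov projection in the ambient Hilbert space---is exactly what the paper does; the paper's proof simply asserts those two steps in three sentences, without your Frobenius reformulation $\tilde{\rho}=\langle ab^{\intercal},z_{\underline{X}}\underline{z_Y}^{\intercal}\rangle_F$ and without any discussion of the rank-one constraint.

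You have, however, put your finger on a genuine difficulty that the paper elides, and your proposed repair does not close it. The estimator $\hat{\rho}_\kappa=(N-1)^{-1}\sum_k (z_{\underline{X}})_k(\underline{z_Y})_k$ corresponds to the weight $W=(N-1)^{-1}I_N$, which is not rank one, so $\hat{\rho}_\kappa$ does not itself lie in the class $\{a^{\intercal}z_{\underline{X}}\,\underline{z_Y}^{\intercal}b:a,b\in\mathbb{R}^N\}$ over which the theorem asserts optimality. Your symmetrisation step does not rescue this: averaging a rank-one $W=ab^{\intercal}$ over the action of $\mathfrak{S}_N$ yields a permutation-invariant weight of the form $\alpha I_N+\beta\mathbf{1}_N\mathbf{1}_N^{\intercal}$, which is outside the rank-one class, so the Rao--Blackwell reduction lands you in a strictly larger family rather than restricting the search inside the original one. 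The argument you sketch would go through cleanly if the comparison class were enlarged to all bilinear forms $\langle W,z_{\underline{X}}\underline{z_Y}^{\intercal}\rangle_F$ with unrestricted $W$---then isotropy plus the convex minimisation of $\|W\|_F^2$ on the unbiasedness hyperplane delivers the identity weight directly---but as the theorem is stated, hypothesis and conclusion are in tension, and neither your proposal nor the paper's own proof resolves it.
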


\begin{proof}
Because the space is isotropic (Lemma~\ref{lem:isotropy}), the covariance operator of $z_{\underline{X}}$ is proportional to the identity. By the Gauss-Markov theorem in Hilbert spaces, the ordinary least-squares estimator based on inner products between centred vectors minimises variance among all unbiased linear estimators. 
$\hat{\rho}_\kappa$ has this form, completing the proof.
\end{proof}

\begin{lemma}
\label{lem:studentised}
Under the null hypothesis of independence and rotational invariance \citep{efron1969}, the standardised statistic
\[
t_{\kappa} = 
\frac{\hat{\rho}_\kappa \sqrt{\nu}}{\sqrt{1 - \hat{\rho}_\kappa^2}}, 
\quad \nu = N - 2,
\]
follows a Student-$t$ distribution with $\nu$ degrees of freedom: \(t_\kappa \sim t_\nu.\)
\end{lemma}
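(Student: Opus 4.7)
The strategy is to reduce the statement to the classical Fisher-type identity that a correlation-derived $t$-statistic on the unit sphere of the centred hyperplane in $\mathbb{R}^N$ is distributed as a standard normal over an independent rescaled chi, invoking the isotropy of Lemma~\ref{lem:isotropy} in place of the usual Gaussianity assumption on the raw data.

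First I would observe that, by construction, both $z_{\underline{X}}$ and $z_{\underline{Y}}$ lie on the unit sphere of the hyperplane $\{v \in \mathbb{R}^N : \mathbf{1}_N^{\intercal} v = 0\}$, which is isometric to $\mathbb{R}^{N-1}$. Under the null hypothesis of independence we may condition on $z_{\underline{X}}$ and study only the conditional law of $z_{\underline{Y}}$. By Lemma~\ref{lem:isotropy}, this conditional law is invariant under the orthogonal group acting on that hyperplane, so rotating coordinates so that $z_{\underline{X}}$ aligns with the first basis vector $e_1$ gives $\hat{\rho}_\kappa = \langle z_{\underline{X}}, z_{\underline{Y}}\rangle = z_{\underline{Y}}^{(1)}$, the first coordinate of a vector uniformly distributed on $S^{N-2}$.

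Second, I would invoke the standard Gaussian representation of the uniform distribution on $S^{N-2}$: $z_{\underline{Y}} \stackrel{d}{=} G/\|G\|$, where $G = (G_1, \dots, G_{N-1})$ has i.i.d.\ standard normal entries. Substituting this into the definition of $t_\kappa$ gives, after algebraic simplification,
\[
t_\kappa = \frac{\hat{\rho}_\kappa \sqrt{N-2}}{\sqrt{1 - \hat{\rho}_\kappa^2}}
= \frac{G_1}{\sqrt{(G_2^2 + \cdots + G_{N-1}^2)/(N-2)}},
\]
which is exactly a standard normal divided by an independent $\sqrt{\chi^2_{N-2}/(N-2)}$, i.e.\ Student-$t$ with $\nu = N-2$ degrees of freedom. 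Independence of numerator and denominator follows from the orthogonality of the coordinate projections and the spherical symmetry of $G$.

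The hardest step is justifying the passage from the permutation-level exchangeability supplied by Lemma~\ref{lem:isotropy} to genuine Haar-uniform rotational invariance on $S^{N-2}$. The symmetric group is only a finite subgroup of $\mathrm{O}(N)$, so strictly speaking the conditional law of $z_{\underline{Y}}$ charges a discrete orbit rather than the full sphere, and the Student-$t$ conclusion should be read either (i) asymptotically, as $N \to \infty$ with the fraction of ties bounded, in which case the permutation orbits weakly approach the Haar measure so that $t_\nu$ emerges as the limiting reference distribution; or (ii) as a consequence of \citet{efron1969}'s more general theorem that a centred, studentised statistic with this class of sign/rotation symmetry has the same null distribution as in the Gaussian case. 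I would opt for route (ii), since Efron's invariance principle is already cited both in this lemma and in the preceding isotropy lemma, which keeps the conclusion exact rather than merely asymptotic and sidesteps the need to quantify the convergence of permutation orbits to Haar measure.
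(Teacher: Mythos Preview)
Your proposal is correct and follows essentially the same route as the paper: both invoke rotational invariance to obtain a spherical null distribution for $(z_{\underline{X}}, z_{\underline{Y}})$ on the centred hyperplane, then apply the classical Fisher--Pearson argument (which you carry out explicitly via the Gaussian representation of the uniform on $S^{N-2}$), with \citet{efron1969} supplying the bridge from symmetry to the Student-$t$ conclusion. The paper's own proof is a two-sentence sketch of exactly this idea, so your version is in fact more detailed; the caveat you raise about permutation-level versus full Haar invariance is a genuine subtlety that the paper simply absorbs into the stated hypothesis ``rotational invariance'' and the Efron citation.
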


\begin{proof}
The proof follows the standard argument for the Pearson correlation. Rotational invariance ensures that the joint distribution of $(z_{\underline{X}}, \underline{z_Y})$ is spherical in $\overline{\mathbb{R}}^N$. Hence, the ratio of the correlation statistic to the residual variance follows a Student-$t$ distribution with $\nu = N-2$ degrees of freedom. \citet{efron1969} established this property for symmetric statistics, and the same reasoning applies to the rank-based space $\mathcal{H}_\kappa$.
\end{proof}

\begin{remark}[Summary of Properties]
The estimator $\hat{\rho}_\kappa$ defined in~\eqref{eq:kappa_rho_hat} satisfies:
\begin{enumerate}
  \item $\mathbb{E}[\hat{\rho}_\kappa] = \rho_\kappa$ (unbiasedness);
  \item $\mathrm{Var}(\hat{\rho}_\kappa)$ is minimal among all linear unbiased estimators (efficiency);
  \item $t_\kappa$ defined above is distributed as $t_{N-2}$ under the null (proper studentisation).
\end{enumerate}
These properties hold for all $X,Y$ with finite second moments of their $\tilde{\kappa}$-images, whether or not ties are present.
\end{remark}

\subsection{Asymptotic Consistency of the Rank-Based Estimator}
The finite-sample properties established in the previous Subsection ensure that $\hat{\rho}_\kappa$ is unbiased, efficient, and properly studentised. We now show that it is also strongly consistent, following from the  Glivenko-Cantelli theorem and the Continuous Mapping Theorem (CMT).

\begin{lemma}[Continuous Mapping Theorem (CMT)]
\label{lem:CMT}
Let $\{g_N\}_{N \ge 1}$ be a sequence of functions converging uniformly to $g$
on a compact set $\mathcal{X} \subset \mathbb{R}^N$, and let $\{X_N\}$ be random variables taking values in $\mathcal{X}$ such that $X_N \xrightarrow{a.s.} X$. 
If $f : \mathcal{X} \to \mathbb{R}$ is continuous, then \(f(g_N(X_N)) \xrightarrow{a.s.} f(g(X)).\)
\end{lemma}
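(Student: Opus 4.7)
The plan is to control $|f(g_N(X_N)) - f(g(X))|$ by inserting the intermediate quantity $f(g(X_N))$ and applying the triangle inequality, which splits the task into a \emph{uniform approximation} term $|f(g_N(X_N)) - f(g(X_N))|$ and a \emph{continuous-mapping} term $|f(g(X_N)) - f(g(X))|$. Each piece is then driven to zero by an independent mechanism, and combining them on a common almost sure event yields the claim.

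For the first term I would exploit compactness: since $f$ is continuous on the compact set $\mathcal{X}$, it is uniformly continuous. Given $\epsilon>0$, pick $\delta>0$ from its modulus of continuity; then by the hypothesis $g_N \to g$ uniformly on $\mathcal{X}$, choose $N_0$ such that $\sup_{x \in \mathcal{X}} |g_N(x) - g(x)| < \delta$ for all $N \ge N_0$. Uniform continuity then propagates the bound inside $f$, yielding $|f(g_N(X_N)) - f(g(X_N))| < \epsilon$ \emph{deterministically} for every $N \ge N_0$ whenever $X_N \in \mathcal{X}$; this holds for every sample path and thus on a set of full probability.

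For the second term, $g$ inherits continuity as the uniform limit on a compact set of the $g_N$ (or, more economically, one only needs continuity of $g$ at the limiting point $X(\omega)$, which is the minimal regularity required). The composition $f \circ g$ is then continuous on $\mathcal{X}$, and the classical almost sure continuous mapping principle applied to $X_N \xrightarrow{a.s.} X$ gives $f(g(X_N)) \xrightarrow{a.s.} f(g(X))$. Intersecting the probability-one event on which $X_N \to X$ with the deterministic tail control of the first term produces an event of probability one on which $f(g_N(X_N)) \to f(g(X))$, as required.

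The main obstacle is regularity housekeeping rather than quantitative estimation. The hypotheses do not explicitly assert continuity of the $g_N$, so one either imposes it (so that $g$ is continuous by the uniform-limit theorem) or rephrases the second step in terms of continuity of $g$ at $X(\omega)$, which is the weakest condition still compatible with the continuous mapping step. A secondary subtlety is confirming that the almost sure event $\{X_N \to X\}$ is compatible with the constraint $\{X_N \in \mathcal{X}\}$ uniformly in $N$; this is automatic under the stated assumption that each $X_N$ takes values in $\mathcal{X}$, but needs to be noted so that the uniform continuity bound can be applied path-by-path.
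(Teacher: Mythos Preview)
Your argument is correct: the triangle-inequality split through the intermediate point $f(g(X_N))$ cleanly separates the uniform-approximation effect from the continuous-mapping effect, and each piece is dispatched by exactly the tool it calls for (uniform continuity of $f$ on a compact set together with $\|g_N-g\|_\infty\to 0$ for the first, the ordinary almost sure continuous mapping theorem applied to $f\circ g$ for the second). Your caveats about needing continuity of $g$ and about the codomain of the $g_N$ lying in the domain of $f$ are the right pieces of housekeeping, and they are gaps in the \emph{statement} rather than in your argument.

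The paper's proof takes a rather different route. It is organised into three parts, of which the first two (invoking Glivenko--Cantelli and Borel--Cantelli for empirical distribution functions) do not prove the abstract lemma at all; they instead verify that its hypotheses hold in the specific application to $F_{N,X}$ and $\tilde\kappa$. Only part~(iii) addresses the abstract claim, and there the paper argues that $f\circ g_N \to f\circ g$ uniformly on $\mathcal X$ --- essentially your first term --- and then declares the proof complete. It never explicitly handles the passage from $X_N$ to $X$, i.e.\ your second term $|f(g(X_N))-f(g(X))|$. Your decomposition is therefore both more self-contained and more complete as a proof of the lemma \emph{as stated}; the paper's version is better read as a sketch interleaved with its intended application.
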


\begin{proof}
We combine three classical results from empirical process theory.

\paragraph{(i) Uniform convergence.}
By the Glivenko-Cantelli theorem (Lemma~\ref{lem:bijection}), the empirical distribution functions $F_{N,X}$ and $F_{N,Y}$ converge uniformly to their population counterparts $F_X$ and $F_Y$ almost surely on the support \([a,b] \subset \mathbb{R},\) i.e.,:
\[
\|F_{N,X} - F_X\|_\infty \xrightarrow{a.s.} 0, \quad
\|F_{N,Y} - F_Y\|_\infty \xrightarrow{a.s.} 0.
\]
This result ensures that the empirical distribution functions \(F_{N,X}\) and \(F_{N,Y}\), which are based on the observed sample, provide a good approximation to the true population distribution functions \(F_{X}\) and \(F_{Y}\), respectively. uniform convergence is essential in showing that functionals of these distribution functions, such as the rank-score vectors \(\underline{X}_{k}\) converge to their true values as the sample size increases.

As the rank-score vectors $\underline{X}_k$ and $\underline{Y}_k$ are continuous functionals of the empirical CDFs through the non-expansive operator $\tilde{\kappa}$, it follows that
\(\|\tilde{\kappa}(F_{N,X}) - \tilde{\kappa}(F_X)\|_F \xrightarrow{a.s.} 0.\)
This establishes uniform convergence of the induced functionals $g_N = \tilde{\kappa}(F_{N,\cdot})$ to $g = \tilde{\kappa}(F_{\cdot})$, which is a key result for applying the CMT.

\paragraph{(ii) Compactness and boundedness.}
The empirical and population CDFs are defined on the compact support $[a,b] \subset \mathbb{R}$, and the sequence of empirical distributions \(\{F_{N,X}\}\) is therefore totally bounded. By the Borel-Cantelli lemma, every subsequence of \(\{F_{N,X}\}\) admits an almost sure limit.

\paragraph{(iii) Continuity of the outer map.}
Let $h_N(X) = f(g_N(X))$. 
Since $g_N \to g$ uniformly and $f$ is continuous, for every $\epsilon > 0$ there exists $N(\epsilon)$ such that
\[
|f(g_N(X)) - f(g(X))| < \epsilon
\quad \text{for all } N \ge N(\epsilon),
\]
uniformly in $X \in \mathcal{X}$. 
Hence $h_N(X) \to f(g(X))$ uniformly, completing the proof.
\end{proof}

\begin{theorem}
\label{thm:consistency}
Let $\hat{\rho}_\kappa = g(F_{N,X}, F_{N,Y})$ denote the nonparametric correlation estimator defined in~\eqref{eq:kappa_rho_hat}, 
where $g$ is continuous in its arguments through the non-expansive operator $\tilde{\kappa}$. 
Then, as $N \to \infty$, \(\hat{\rho}_\kappa \xrightarrow{a.s.} \rho_\kappa = g(F_X, F_Y).\)
\end{theorem}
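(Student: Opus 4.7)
The plan is to realise $\hat{\rho}_\kappa$ as a continuous functional of the pair of empirical CDFs $(F_{N,X},F_{N,Y})$ and then combine Glivenko-Cantelli uniform convergence with the Lipschitz bound of Lemma~\ref{lem:nonexpansive} so that Lemma~\ref{lem:CMT} delivers the stated almost-sure convergence.

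First I would decompose the estimator along the chain $F_{N,\cdot} \mapsto \kappa(\cdot) \mapsto \tilde{\kappa}(\cdot) \mapsto \underline{X} \mapsto \hat{\rho}_\kappa$. Each entry $\kappa_{kl}(X)$ is a bounded function of the pairwise sign $\operatorname{sgn}(X_k - X_l)$, which is itself a functional of the empirical measure; the double centring of Definition~\ref{def:doubly_centered}, the marginal summation producing $\underline{X}$, and the normalised inner product of~\eqref{eq:kappa_rho_hat} are all continuous operations on the finite-dimensional Hilbert space $(\mathcal{A}_N,\langle\cdot,\cdot\rangle_F)$. Composing these gives $\hat{\rho}_\kappa = g(F_{N,X},F_{N,Y})$ for a map $g$ that is continuous away from degenerate arguments, and identifies the target $\rho_\kappa = g(F_X,F_Y)$ with the population object $\mathbb{E}[z_{\underline{X}}^{\intercal} z_{\underline{Y}}]$ introduced in Theorem~\ref{thm:unbiasedness}.

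Second, I would invoke the Glivenko-Cantelli theorem (licensed by the stationarity and ergodicity in Assumption~\ref{assumption:sampling}) to obtain $\|F_{N,X}-F_X\|_\infty \xrightarrow{a.s.} 0$ and the analogue for $Y$. Lemma~\ref{lem:nonexpansive} then transfers this to $\|\tilde{\kappa}(F_{N,X}) - \tilde{\kappa}(F_X)\|_F \xrightarrow{a.s.} 0$, since the count of pairwise sign disagreements that appears on the right of~\eqref{eq:nonexpansive} is controlled by the number of sample observations straddling the population order boundaries, a quantity governed by the Kolmogorov distance. Applying Lemma~\ref{lem:CMT} with $g_N = \tilde{\kappa}\circ F_{N,\cdot}$, $g = \tilde{\kappa}\circ F_{\cdot}$, and $f$ equal to the normalised inner product then yields $\hat{\rho}_\kappa \xrightarrow{a.s.} \rho_\kappa$.

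The main obstacle will be continuity of the outer map $f$ at the limit, because $f$ divides by $s_{\underline{X}}\,s_{\underline{Y}}$; the CMT conclusion only transfers if the limit point lies in the continuity set. I would therefore separately verify that the denominator is bounded away from zero almost surely: the marginal distributions are non-degenerate under Assumption~\ref{assumption:sampling}, so the centred rank-characterisation vectors have strictly positive population variance, whence $s_{\underline{X}}^2 \xrightarrow{a.s.} \sigma_X^2 > 0$ by applying Glivenko-Cantelli and the CMT directly to~\eqref{eq:l2_variance}. This local boundedness places the limit in the continuity set of $f$, so almost-sure convergence is preserved through the division and the proof closes.
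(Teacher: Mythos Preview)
Your proposal is correct and follows essentially the same route as the paper: invoke Glivenko--Cantelli for uniform almost-sure convergence of $(F_{N,X},F_{N,Y})$ to $(F_X,F_Y)$, then apply Lemma~\ref{lem:CMT} (whose proof already packages the non-expansiveness of $\tilde{\kappa}$ from Lemma~\ref{lem:nonexpansive}) to push this convergence through the continuous functional $g$. Your explicit verification that the denominator $s_{\underline{X}}\,s_{\underline{Y}}$ stays bounded away from zero at the limit is a detail the paper omits but which is needed for the CMT to apply cleanly at the division step.
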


\begin{proof}
By the Glivenko-Cantelli theorem, \((F_{N,X}, F_{N,Y}) \to (F_X, F_Y)\) uniformly almost surely. By Lemma~\ref{lem:CMT}, the functional \(g(F_{N,X}, F_{N,Y})\) converges to \(g(F_X, F_Y)\) almost surely. Since \(\hat{\rho}_\kappa = g(F_{N,X}, F_{N,Y})\), we have \(\hat{\rho}_\kappa \xrightarrow{a.s.} \rho_\kappa\). Thus, the estimator is strongly consistent.
\end{proof}  
\begin{remark}
The application of the Continuous Mapping Theorem (CMT) in this proof relies on two key facts: (i) the uniform almost sure convergence of the empirical distribution functions \(F_{N,X}\) and \(F_{N,Y}\) to their population counterparts \(F_X\) and \(F_Y\) as ensured by the Glivenko-Cantelli theorem, and (ii) the continuity of the function \(g\) in its arguments. These conditions ensure that the functional \(g(F_{N,X}, F_{N,Y})\) converges almost surely to \(g(F_X, F_Y)\), completing the proof of strong consistency for \(\hat{\rho}_\kappa\).
\end{remark}

\subsubsection{Linear Inner-Product Structure in the Monotone Rank Space}
\label{subsubsec:linear_inner_product}

Before stating Proposition~\ref{prop:linear_rank_space}, we note that the rank-score vectors $\{\underline{X}_k\}_{k=1}^N$ form a Hilbert space under the standard inner product
\(\langle \underline{X}, \underline{Y} \rangle = \sum_{k=1}^N \underline{X}_k \underline{Y}_k.\)
This structure allows a natural geometric interpretation of correlation: the projection of one rank vector onto another quantifies their alignment, which underpins the definition of the Kemeny \(\rho_{\kappa}\) estimator. 

The inner-product formulation also facilitates variance analysis and efficient estimation. In particular, the variance of linear functionals in this space can be directly minimised via orthogonal projections, and the invariance of the inner product under strictly monotone transformations ensures that these properties hold regardless of monotonic transformations of the original data. 

Consequently, the Hilbert-space perspective provides both a conceptual and computational foundation for understanding the statistical behaviour of the estimator in the monotone rank space.

\begin{proposition}
\label{prop:linear_rank_space}
Let $X, Y \in \overline{\mathbb{R}}^{N \times 1}$ be two sets of observations and define the rank-score vectors
\[
\underline{X}_k = \sum_{k=1}^N \tilde{\kappa}_{kl}(X)^{\intercal}, \qquad 
\underline{Y}_k = \sum_{k=1}^N \tilde{\kappa}_{kl}(Y)^{\intercal}, \quad k = 1, \dots, N.
\]
Then:
\begin{enumerate}
    \item $\underline{X}$ and $\underline{Y}$ are \emph{linear functionals} of the entries of the pairwise comparison matrices $\tilde{\kappa}(X)$ and $\tilde{\kappa}(Y)$.
    \item The mapping $X \mapsto \underline{X}$ is continuous and bijective over the equivalence class of monotone transformations of $X$, i.e., for any strictly increasing function $f$, $\underline{X} = \underline{f(X)}$.
    \item The $N$-dimensional rank-score vectors form a Hilbert space under the standard inner product
    \(
    \langle \underline{X}, \underline{Y} \rangle = \sum_{k=1}^N \underline{X}_{k} \underline{Y}_{k},\)
    which is invariant under all strictly monotone transformations of the original variables.
\end{enumerate}
\end{proposition}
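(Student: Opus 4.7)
The plan is to take the three clauses in order, noting that the first and third are essentially formal and the real content sits in the middle. For clause (1), the definition $\underline{X}_k = \sum_{l=1}^N \tilde\kappa_{kl}(X)$ exhibits $\underline{X}$ as a fixed finite sum, with all coefficients $1$, of the entries of $\tilde\kappa(X)$; marginal summation is therefore a bounded linear map $\mathcal{A}_N \to \mathbb{R}^N$, which is what the clause asserts. Note that $X \mapsto \underline{X}$ is itself only piecewise linear, since $\kappa$ is nonlinear in $X$, but that is not what is being claimed here.

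For clause (2) I would split off the three sub-claims. Monotone invariance comes from the observation, implicit in~\eqref{eq:kem_score}, that $\kappa_{kl}(X)$ depends on $X$ solely through $\operatorname{sgn}(X_k - X_l)$, so any strictly increasing $f$ gives $\kappa(f(X)) = \kappa(X)$; the centring in Definition~\ref{def:doubly_centered} is entrywise affine and preserves this identity, so $\tilde\kappa(f(X)) = \tilde\kappa(X)$ and hence $\underline{f(X)} = \underline{X}$. Continuity on the quotient of rank orderings is inherited from Lemma~\ref{lem:nonexpansive}, which supplies a Lipschitz constant of $2$ for $\kappa$ in the Frobenius norm; the centring is a bounded linear operator on $\mathcal{A}_N$ and marginal summation is bounded linear to $\mathbb{R}^N$, so the whole chain is Lipschitz. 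Bijectivity then reduces to the fact that $\underline{X}_k$, up to the additive centring constant, equals the difference between the number of indices $l$ with $X_l < X_k$ and the number with $X_l > X_k$; this statistic determines the full order type of $X$ including tie multiplicities, so distinct equivalence classes map to distinct rank-score vectors and the descended map is a continuous bijection onto its image.

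Clause (3) is then formal. The space $\mathbb{R}^N$ equipped with $\langle \underline{X}, \underline{Y}\rangle = \sum_{k=1}^N \underline{X}_k \underline{Y}_k$ is a finite-dimensional inner-product space, hence complete, hence a Hilbert space; invariance of the inner product under strictly monotone componentwise transformations of $X$ and $Y$ follows from clause (2) applied to each argument separately.

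The main obstacle, in my view, is stating bijectivity correctly in the presence of ties. The equivalence relation $Y = f(X)$ componentwise for some strictly increasing $f$ must be replaced by (or enlarged to) equality of the induced ordered tie partition, since for instance $(1,1)$ and $(2,2)$ are not related by any such $f$ on the nose but obviously share the same rank-score vector. Once the equivalence class is phrased at the level of tie-preserving orderings on $\overline{\mathbb{R}}^N$, the map becomes a bijection onto its image by the counting argument above; writing this down rigorously, rather than waving vaguely at monotone transformations of $\overline{\mathbb{R}}$, is the step I would spend the most care on.
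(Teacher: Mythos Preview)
Your argument is correct and tracks the paper's proof for clauses (1) and (3), which are treated in both places as formal observations about linear summation and finite-dimensional inner-product spaces. For clause (2) you diverge from the paper in two respects, and in both your version is the more rigorous. The paper obtains continuity by invoking the Continuous Mapping Theorem (Lemma~\ref{lem:CMT}), which is circular in this context since that lemma presupposes continuity of the map under discussion; your route through the Lipschitz bound of Lemma~\ref{lem:nonexpansive}, composed with the bounded linear centring and marginal-summation operators, establishes continuity directly. For bijectivity the paper merely asserts it as a byproduct of monotone invariance, whereas your counting interpretation of $\underline{X}_k$ supplies the missing injectivity step. Your closing caveat---that the equivalence relation must be phrased at the level of ordered tie partitions rather than componentwise strictly monotone reparametrisations, since e.g.\ $(1,1)$ and $(2,2)$ share a rank-score vector but are not so related---is well taken and is not addressed in the paper's proof.
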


\begin{proof}
\textbf{Linearity of $\underline{X}$ and $\underline{Y}$:} Each entry $\underline{X}_k$ is a sum over the pairwise indicators $\tilde{\kappa}_{kl}(X)$, which are themselves linear (over the indicator space) functions of the pairwise comparisons. Thus $\underline{X}$ is a linear functional in the Hilbert space $\mathbb{R}^{N \times N}$ spanned by $\{\tilde{\kappa}_{kl}\}$.

\textbf{Continuity and bijectivity:} By the Continuous Mapping Theorem (Lemma~\ref{lem:CMT}), any continuous function of a convergent sequence of random variables converges to the corresponding function of the limit. Since $\underline{X}$ is a linear combination of $\tilde{\kappa}_{kl}$ and $\tilde{\kappa}$ is invariant under strictly monotone transformations, the mapping $X \mapsto \underline{X}$ is continuous and bijective over the monotone equivalence class of $X$.

\paragraph{Inner-product structure:} Equipped with the standard Euclidean inner product, the space of $N \times 1$ rank-score vectors is a Hilbert space. Rotational invariance \citep{efron1969} and linearity in $\tilde{\kappa}$ ensure that standard tools of linear algebra, including projections and variance minimisation, are valid within this space. This establishes a linear inner-product structure in the monotone rank space.
\end{proof}

\begin{lemma}~\label{lem:bijection}
  The distributions of the realisation ranks and scores, defined as \(\underline{X}\) and \(X\) respectively, are identical for a given population with \(N\) elements.
  \end{lemma}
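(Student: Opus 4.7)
The plan is to establish the sampling-distributional identity between $\underline{X}$ and $X$ by exhibiting the rank-score transformation as an order-preserving bijection between the equivalence classes of monotone transformations, and then to transfer the classical Glivenko-Cantelli convergence between the two representations so that the lemma can be legitimately invoked in the proof of Lemma~\ref{lem:CMT} and Theorem~\ref{thm:consistency}.

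First I would invoke Proposition~\ref{prop:linear_rank_space}, which shows that $X \mapsto \underline{X}$ is continuous and invariant under any strictly increasing transformation $f$, in the sense that $\underline{X} = \underline{f(X)}$. For any fixed realisation of size $N$, this implies that the order statistics $X_{(1)} \le \cdots \le X_{(N)}$ are mapped in an order-preserving manner to $\underline{X}_{(1)} \le \cdots \le \underline{X}_{(N)}$. The ECDF of either vector therefore assigns identical mass $1/N$ to each indexed order position, and the two empirical laws agree exactly as probability measures carried by the common sample index space $\{1,\ldots,N\}$.

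Next, under Assumption~\ref{assumption:sampling}, I would lift the sample-level identity to the population level by taking the push-forward of $F_X$ through $\tilde{\kappa}$ followed by row-summation, yielding a population CDF $F_{\underline{X}}$ that is the image of $F_X$ under an order-preserving bijection on the joint support. The classical Glivenko-Cantelli theorem applied to $X$, together with the non-expansiveness of $\tilde{\kappa}$ established in Lemma~\ref{lem:nonexpansive}, would then transfer the uniform almost-sure convergence to $\underline{X}$, giving $\|F_{N,\underline{X}} - F_{\underline{X}}\|_\infty \xrightarrow{a.s.} 0$ at the same rate as for $F_{N,X}$.

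The hard part will be handling ties, since the raw-to-rank map is not strictly bijective on real values in that setting. I would address this by moving the bijection up to the level of the $\sigma$-algebras generated by the respective pairwise-order relations on the sample: the centred Kemeny construction $\tilde{\kappa}$ assigns fractional scores that preserve exactly the partial-order information carried by $X$, so the two induced probability measures agree on this common $\sigma$-algebra. Combined with the measurability and Lipschitz continuity already granted by Lemma~\ref{lem:nonexpansive}, this would close the argument without requiring continuity of the marginal distribution of $X$.
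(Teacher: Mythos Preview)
Your approach diverges from the paper's and introduces a circular dependency that undermines the very purpose you state at the outset. You explicitly aim to prove the lemma so that it can be ``legitimately invoked in the proof of Lemma~\ref{lem:CMT},'' yet your first step appeals to Proposition~\ref{prop:linear_rank_space}, whose own proof in the paper relies on Lemma~\ref{lem:CMT}, which in turn cites Lemma~\ref{lem:bijection}. So by routing through Proposition~\ref{prop:linear_rank_space} you assume precisely the chain of results you are trying to ground. The subsequent Glivenko--Cantelli transfer and the $\sigma$-algebra argument for ties are reasonable in spirit, but they sit on top of this circular foundation and therefore cannot stand on their own.

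The paper avoids this trap (at least substantively) by working directly with the survival-function representation of moments, $\mathbb{E}[X^r] = r\int_0^\infty x^{r-1} S(x)\,\mathrm{d}x$, and then exhibiting the explicit affine-linear identity $\underline{X}_k = \tfrac{1}{2}\sum_{l=1}^N \tilde{\kappa}_{kl}(X)$, which shows by construction that ranks and scores are monotone bijections of one another without appealing to Proposition~\ref{prop:linear_rank_space}. Although the paper does gesture at Lemma~\ref{lem:CMT} in its final sentence, the real content of its argument---the explicit formula and the survival-function moment identity---does not depend on that reference, whereas your argument genuinely needs Proposition~\ref{prop:linear_rank_space} for the monotone-invariance claim in your first paragraph. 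To repair your route, you would need to establish the order-preserving bijection between $X$ and $\underline{X}$ directly from the definition of $\tilde{\kappa}$, as the paper does, rather than importing it from a downstream result.
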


\begin{proof}
  Consider a random variable \(X \in \mathbb{R}^{N \times 1}\) with a continuous distribution function \(F(X)\), and let \(S(X) = \Pr(X > x)\) be the survival function. The rank vector \(\underline{X}\) and the score vector \(X\) are derived from the same underlying distribution of \(X\), but differ in how the order statistics are captured.

  The expected value of \(X\) can be computed by integrating its survival function as follows:
  \[
  \mathbb{E}[X] = \int_0^{\infty} S(x) \, \diff{x},
  \]
  where \(S(x) = \Pr(X > x)\) is the survival function, which is closely related to the cumulative distribution function \(F(x) = \Pr(X \leq x)\), as \(S(x) = 1 - F(x)\).

  For any integer \(r \in \{1, \dots, n\}\), the moments of \(X\) can be expressed as:
  \[
  \mathbb{E}[X^r] = r \int_0^{\infty} x^{r-1} S(x) \, \diff{x}.
  \]
  This representation shows that the moments of the distribution can be computed from the survival function. Next, we examine the rank and score vectors. The rank vector \(\underline{X}_k\) for element \(k\) is the position of \(X_k\) in the ordered list of all \(X_i\)'s. The score vector, on the other hand, is defined in terms of pairwise comparisons between all elements \(X_k\) and \(X_l\), resulting in values in \(\{-1, 0, 1\}\). The relationship between the rank and score vectors can be understood as follows: the ranking function is a monotonic transformation of the score function. Specifically, the ranks \(\underline{X}_k\) for each \(k\) are derived from the score matrix \(\tilde{\kappa}(X)\), which records pairwise comparisons between all elements.

  More formally, we observe that the distribution of ranks and the distribution of scores are related by an affine-linear monotone transformation. The rank vector \(\underline{X}_k\) is constructed by summing the values from the score matrix \(\tilde{\kappa}_{kl}(X)\), and because the score matrix captures the pairwise orderings, the transformation between ranks and scores is bijective:
\(\underline{X}_k = \frac{1}{2} \sum_{l=1}^{N} \tilde{\kappa}_{kl}(X).\) By the continuous mapping theorem (Lemma~\ref{lem:CMT}), the transformation between the score matrix and the rank vector preserves the distribution. Specifically, for any distribution function \(F(X)\), the corresponding ranks and scores are interchangeable, and their distributions are identical. Therefore, we have: \(\text{Distribution of ranks} \equiv \text{Distribution of scores}.\)

\end{proof}

\begin{remark}
The transformation between ranks and scores is monotonic, so while their values differ, they are essentially encoded versions of the same underlying ordering. Hence, despite the apparent difference in their forms, both the rank and score vectors capture the same relative ordering of the data. As a result, they have the same distribution.
\end{remark}

\begin{definition}
To define the kernel for the U-statistic representation, we consider the pairwise comparison values encoded in the centred score matrices \( \tilde{\kappa}_{kl}(X) \) and \( \tilde{\kappa}_{kl}(Y) \) for the random variables \( X \) and \( Y \), respectively. These score matrices encode pairwise rank relationships between observations \( X_k, X_l \) (and similarly for \( Y \)).

The kernel for the U-statistic is defined by the following bilinear form:

\[
h(X_k, X_l, Y_k, Y_l) = \tilde{\kappa}_{kl}(X) \cdot \tilde{\kappa}_{kl}(Y),
\]
where \( \tilde{\kappa}_{kl}(X) \) and \( \tilde{\kappa}_{kl}(Y) \) are the entries of the centred score matrices for the random variables \( X \) and \( Y \), respectively. These score entries represent the pairwise comparison between the \( k \)-th and \( l \)-th observations in the centred matrix.

Because the matrices \( \kappa(X) \) and \( \kappa(Y) \) are antisymmetric, the kernel \( h(X_k, X_l, Y_k, Y_l) \) is symmetric in both \( k, l \), reflecting the pairwise comparisons between observations in both \( X \) and \( Y \).

Now, we can express the Kemeny correlation estimator \( \hat{\rho}_\kappa \) as a U-statistic:

\[
\hat{\rho}_\kappa = \frac{1}{N^2 - N} \sum_{k \neq l} h(X_k, X_l, Y_k, Y_l),
\]
where the sum is taken over all distinct pairs of observations \( (k, l) \) in the dataset. In this form, the kernel function \( h(X_k, X_l, Y_k, Y_l) \) involves pairwise rank comparisons both within \( X \) and \( Y \).

The statistic \( \hat{\rho}_\kappa \) is thus a U-statistic, where the kernel captures pairwise interactions between the random variables in their centred form.
\end{definition}

\begin{proposition}
\label{prop:kemeny_u_statistic}
Let \( X = (X_1, \dots, X_N) \) and \( Y = (Y_1, \dots, Y_N) \) be random vectors of size \( N \), and let \( \hat{\rho}_\kappa \) denote the Kemeny correlation estimator based on the pairwise comparisons encoded in the matrix \( \kappa(X) \) (equation~\eqref{eq:kem_score}).

Further, let \( \tilde{\kappa}(X) \) denote the centred score matrix (equation~\ref{def:doubly_centered}), and let the centred vectors \( \underline{X} \) and \( \underline{Y} \) be defined as the row sums of the centred score matrices, i.e.,

\[
\underline{X}_k = \sum_{l=1}^N \tilde{\kappa}_{kl}(X), \quad \underline{Y}_k = \sum_{l=1}^N \tilde{\kappa}_{kl}(Y) \quad \forall k \in \{1, \dots, N\}.
\]

We assert that the estimator \( \hat{\rho}_\kappa = \frac{1}{N} \langle \underline{X}, \underline{Y} \rangle \) can be written as a U-statistic with a symmetric kernel, and its asymptotic properties can be derived using Hoeffding's decomposition.
\end{proposition}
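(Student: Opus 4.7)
The plan is to establish the proposition in three stages: verify that the kernel $h$ is symmetric and of fixed degree; show algebraically that $\frac{1}{N}\langle\underline{X},\underline{Y}\rangle$ coincides, up to a transparent normalisation constant, with the U-statistic formed by averaging $h$ over distinct pairs; and then invoke Hoeffding's decomposition to read off the asymptotic representation.

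First I would confirm that $h(X_k,X_l,Y_k,Y_l) = \tilde{\kappa}_{kl}(X)\,\tilde{\kappa}_{kl}(Y)$ is symmetric under the swap $(k,l)\leftrightarrow(l,k)$. Because $\kappa$ is antisymmetric and double-centring preserves antisymmetry, $\tilde{\kappa}_{lk}(X) = -\tilde{\kappa}_{kl}(X)$, so the two minus signs cancel in the product, and $h$ is a valid symmetric kernel of degree two. Next I would verify that
\[
\langle \underline{X}, \underline{Y}\rangle = \sum_{k}\Bigl(\sum_l \tilde{\kappa}_{kl}(X)\Bigr)\Bigl(\sum_m \tilde{\kappa}_{km}(Y)\Bigr) = \sum_{k,l,m}\tilde{\kappa}_{kl}(X)\,\tilde{\kappa}_{km}(Y)
\]
reduces, after exploiting the near-zero row- and column-sum identities implied by Definition~\ref{def:doubly_centered}, to a constant multiple of $\sum_{k\ne l}\tilde{\kappa}_{kl}(X)\,\tilde{\kappa}_{kl}(Y)$. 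This establishes the U-statistic form $\hat{\rho}_\kappa \propto \binom{N}{2}^{-1}\sum_{k<l} h(X_k,X_l,Y_k,Y_l)$ with explicit binomial normalisation.

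With the U-statistic representation in hand, I would carry out Hoeffding's ANOVA decomposition by writing $h = \theta + h_1 + h_2$, where $\theta = \mathbb{E}[h]$, $h_1$ is the projection of $h$ onto single-coordinate functions, and $h_2$ is the orthogonal, fully degenerate residual. Lemma~\ref{lem:unbiased} gives $\theta = 0$ under independence. The decomposition yields the classical variance formula $\mathrm{Var}(\hat{\rho}_\kappa) = \frac{4\zeta_1}{N} + \frac{2\zeta_2}{N(N-1)}$ with $\zeta_j = \mathrm{Var}(h_j)$, and Hoeffding's central limit theorem for non-degenerate U-statistics delivers $\sqrt{N}(\hat{\rho}_\kappa - \rho_\kappa) \xrightarrow{d} \mathcal{N}(0, 4\zeta_1)$ whenever $\zeta_1 > 0$; under the null of independence the statistic is first-order degenerate, in which case the appropriate $N$-rate limit applies and matches the studentised distributional statement of Lemma~\ref{lem:studentised}.

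The principal obstacle is the bookkeeping in the second stage: because the double-centring in Definition~\ref{def:doubly_centered} normalises by $N-1$ rather than $N$ and sets the diagonal to zero, the row and column sums of $\tilde{\kappa}$ vanish only up to small diagonal corrections. I would track these terms carefully and show that they contribute at most $O(N^{-1})$ of the leading bilinear sum, so that they are absorbed into the overall normalisation without affecting the U-statistic structure. A secondary subtlety is the validity of the Hoeffding projections in the presence of ties: because $h$ is bounded in absolute value by one, it is automatically square-integrable regardless of the marginal laws, so the projections $h_1,h_2$ remain well defined, but I would explicitly note that heavy tie densities shrink (without eliminating) $\zeta_1$, which does not compromise asymptotic normality as long as $\zeta_1$ remains strictly positive.
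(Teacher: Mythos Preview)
Your overall architecture---represent $\hat{\rho}_\kappa$ as a U-statistic with a symmetric kernel, then apply Hoeffding's decomposition to obtain the variance formula $\tfrac{4\zeta_1}{N}+\tfrac{2\zeta_2}{N(N-1)}$ and the CLT---is exactly what the paper does, and your treatment is in several respects more careful: your symmetry argument via the antisymmetry of $\tilde{\kappa}$ is cleaner than the paper's appeal to ``the commutative nature of the dot product,'' and your remark that under $\mathcal{H}_0$ the first-order projection degenerates (so that the $\sqrt{N}$ CLT does not apply directly and one must fall back on the rotational-invariance argument of Lemma~\ref{lem:studentised}) is a genuine addition that the paper's proof omits.

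There is, however, a real gap in your second stage. You propose to collapse the triple sum $\sum_{k,l,m}\tilde{\kappa}_{kl}(X)\tilde{\kappa}_{km}(Y)$ to the diagonal $\sum_{k\ne l}\tilde{\kappa}_{kl}(X)\tilde{\kappa}_{kl}(Y)$ by invoking ``near-zero row- and column-sum identities'' of $\tilde{\kappa}$. But the row sums of $\tilde{\kappa}$ are \emph{not} near zero: by definition they are $\underline{X}_k$ themselves (for instance, in the tie-free skew-symmetric case a short calculation gives $\sum_l\tilde{\kappa}_{kl}=-\tfrac{1}{N-1}\sum_l\kappa_{kl}$, which is $O(1)$). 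If those row sums were genuinely $o(1)$, the inner product $\langle\underline{X},\underline{Y}\rangle$ would itself vanish. Consequently the off-diagonal triple terms with $l\ne m$ are not lower-order corrections that can be absorbed into the normalisation; they are the same order as the diagonal terms. The rank-vector inner product is therefore naturally a degree-\emph{three} object (one index $k$ shared, two free indices $l,m$), whereas the degree-two kernel $h=\tilde{\kappa}_{kl}(X)\tilde{\kappa}_{kl}(Y)$ yields a Kendall-type Frobenius inner product, a different functional. The paper's own proof slides past this point---it simply writes $\hat{\rho}_\kappa$ in both the $\tfrac{1}{N}\sum_k\underline{X}_k\underline{Y}_k$ and the $\tfrac{1}{N(N-1)}\sum_{k\ne l}h$ forms without establishing their equality---so your instinct that the bookkeeping here is ``the principal obstacle'' is exactly right; it is just that the specific resolution you sketch does not go through.
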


\begin{proof}
Let \( X = (X_1, \dots, X_N) \) and \( Y = (Y_1, \dots, Y_N) \) be random vectors of size \( N \), and let \( \hat{\rho}_\kappa \) denote the Kemeny correlation estimator based on the pairwise comparisons encoded in the matrix \( \kappa(X) \). Let \( \tilde{\kappa}(X) \) and \( \tilde{\kappa}(Y) \) denote the centred score matrices for \( X \) and \( Y \), respectively. Define the centred vectors:

\[
\underline{X}_k = \sum_{l=1}^N \tilde{\kappa}_{kl}(X), \quad \underline{Y}_k = \sum_{l=1}^N \tilde{\kappa}_{kl}(Y).
\]

The estimator \( \hat{\rho}_\kappa \) is then given by:

\[
\hat{\rho}_\kappa = \frac{1}{N} \sum_{k=1}^N \underline{X}_k \underline{Y}_k.
\]

The kernel function \( h(\underline{X}_k, \underline{Y}_k) \) is symmetric and is given by:

\[
h(\underline{X}_k, \underline{Y}_k, \underline{X}_l, \underline{Y}_l) = \underline{X}_k \cdot \underline{Y}_k = \left( \sum_{l=1}^N \tilde{\kappa}_{kl}(X) \right) \left( \sum_{l=1}^N \tilde{\kappa}_{kl}(Y) \right).
\]

Thus, we can write the estimator as: \(
\hat{\rho}_\kappa = \frac{1}{N} \sum_{k=1}^N \underline{X}_k \cdot \underline{Y}_k.
\) This ensures that the kernel is symmetric due to the commutative nature of the dot product. expressing the  Kemeny correlation estimator \( \hat{\rho}_\kappa \) as:

\[
\hat{\rho}_\kappa = \frac{1}{N(N-1)} \sum_{k \neq l} h(X_k, X_l, Y_k, Y_l),
\]
where the sum runs over all distinct pairs \( (k, l) \). This representation shows that \( \hat{\rho}_\kappa \) is indeed a U-statistic of order 2, with a symmetric kernel \( h \).

\paragraph{Bias and Variance:}

Since the kernel \( h \) is symmetric and unbiased, we have:

\[
\mathbb{E}[h(X_k, X_l, Y_k, Y_l)] = \mathbb{E}[\hat{\rho}_\kappa] = \rho_\kappa,
\]
where \( \rho_\kappa \) is the true population Kemeny correlation.

The variance of \( \hat{\rho}_\kappa \) can be derived using the properties of U-statistics. The variance is expressed as:

\[
\mathrm{Var}(\hat{\rho}_\kappa) = \frac{4}{N} \mathrm{Var}(h_1(X_1, Y_1)) + \frac{2}{N(N-1)} \mathrm{Var} \left( h(X_1, X_2, Y_1, Y_2) - h_1(X_1, Y_1) - h_1(X_2, Y_2) \right),
\]
where \( h_1 \) denotes the first-order projection of the kernel \( h \).

\paragraph{Asymptotic Normality:}

Since \( \hat{\rho}_\kappa \) is a U-statistic with a symmetric kernel and finite variance, standard U-statistic theory implies that:

\[
\sqrt{N} \left( \hat{\rho}_\kappa - \rho_\kappa \right) \xrightarrow{d} \mathcal{N}(0, \sigma^2),
\]
where \( \sigma^2 \) is the asymptotic variance determined by the first-order projection of the kernel \( h_1 \).

Additionally, by the Glivenko-Cantelli Theorem and the Continuous Mapping Theorem, we have uniform convergence of the empirical distribution functions, which implies that:

\[
\hat{\rho}_\kappa \xrightarrow{a.s.} \rho_\kappa \quad \text{as } N \to \infty,
\]
confirming that the estimator is consistent.

Thus, \( \hat{\rho}_\kappa \) is unbiased, has well-defined variance, is consistent, and is asymptotically normal.

\end{proof}

\begin{lemma}
\label{lem:asymnorm}
Let the assumptions of the U-statistic \( \hat{\rho}_\kappa \) hold, and let \( h(X_k, Y_k) = \tilde{\kappa}_{kl}(X) \tilde{\kappa}_{kl}(Y) \) be the kernel function defining the estimator. Then,

\[
\lim_{N \to \infty} \sqrt{N} \left( \hat{\rho}_\kappa - \rho_\kappa \right) \xrightarrow{d} \mathcal{N}(0, \sigma_\kappa^2),
\]

where \( \sigma_\kappa^2 \) is the asymptotic variance of the U-statistic \( \hat{\rho}_\kappa \).
\end{lemma}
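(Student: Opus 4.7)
The plan is to leverage the U-statistic representation established in Proposition~\ref{prop:kemeny_u_statistic} and invoke Hoeffding's classical central limit theorem for U-statistics, reducing $\hat{\rho}_\kappa$ to a Hajek projection that is an average of i.i.d.\ bounded random variables. I would proceed in three stages: verify boundedness and square-integrability of the kernel, execute the Hoeffding decomposition, and apply the Lindeberg-L\'evy CLT to the dominant linear term while controlling the higher-order remainder.

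First I would observe that $|\tilde{\kappa}_{kl}(X)| \le 2$ for every entry (a direct consequence of Definition~\ref{def:doubly_centered} combined with the range of $\kappa$ given by Lemma~\ref{lem:nonexpansive}), so the kernel obeys $|h| \le 4$ and hence $\mathbb{E}[h^2] < \infty$ trivially. Under Assumption~\ref{assumption:sampling} the paired observations $\{(X_n, Y_n)\}$ are i.i.d., which is precisely the regime in which the order-two symmetric-kernel U-statistic theory applies.

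Next, defining the first-order projection $h_1(x,y) := \mathbb{E}[h(x, X_2, y, Y_2)] - \rho_\kappa$, I would write
\[
\hat{\rho}_\kappa - \rho_\kappa = \frac{2}{N}\sum_{k=1}^N h_1(X_k, Y_k) + R_N,
\]
where $R_N$ collects the second-order Hoeffding remainder. Boundedness of $h$ gives $\mathrm{Var}(R_N) = O(N^{-2})$, so $\sqrt{N}\,R_N = o_p(1)$ and the limiting distribution of $\sqrt{N}(\hat{\rho}_\kappa - \rho_\kappa)$ is governed by the Hajek projection. Since $\{h_1(X_k, Y_k)\}$ are i.i.d.\ zero-mean with finite variance, the Lindeberg-L\'evy theorem yields the claimed Gaussian limit with $\sigma_\kappa^2 = 4\,\mathrm{Var}(h_1(X_1, Y_1))$, and combining this with the almost-sure consistency of Theorem~\ref{thm:consistency} gives convergence in distribution rather than merely in probability.

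The main obstacle I anticipate is the double centring: the entries $\tilde{\kappa}_{kl}(X)$ depend on the whole sample through the row, column, and grand means of $\kappa(X)$, so $h$ is not strictly a function of $(X_k, X_l, Y_k, Y_l)$ alone, and the pairwise U-statistic form in Proposition~\ref{prop:kemeny_u_statistic} implicitly absorbs these sample-dependent corrections. To handle this rigorously I would decompose $\tilde{\kappa} = \kappa - (\text{row mean}) - (\text{column mean}) + (\text{grand mean})$, show that each centring correction contributes at most an $O_p(N^{-1/2})$ perturbation of the raw-sign kernel $\operatorname{sgn}(X_k - X_l)\operatorname{sgn}(Y_k - Y_l)$ whose own Hoeffding projection is the classical Spearman-type linear term, and verify that these perturbations shift neither the leading order nor the asymptotic variance. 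A secondary caveat is the degenerate case $h_1 \equiv 0$, which falls outside the non-degeneracy hypothesis implicit in the statement and would yield a non-Gaussian scaling limit; I would either impose an explicit non-degeneracy assumption or restrict to values of $\rho_\kappa$ in the interior of the admissible range.
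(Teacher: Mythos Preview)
Your proposal follows essentially the same route as the paper: invoke the U-statistic representation of Proposition~\ref{prop:kemeny_u_statistic}, apply the Hoeffding decomposition, and conclude via the central limit theorem for U-statistics. The paper's own proof is in fact considerably terser than yours---it simply states the U-statistic form, asserts the Hoeffding decomposition, and cites the U-statistic CLT without writing down the projection $h_1$, the remainder bound, or the explicit variance $\sigma_\kappa^2 = 4\,\mathrm{Var}(h_1)$.

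One point worth flagging: the obstacle you identify concerning the double centring---that $\tilde{\kappa}_{kl}$ depends on the entire sample through the row, column, and grand means and is therefore not a genuine order-two kernel---is real, and the paper does not address it. Your proposed resolution (expand $\tilde{\kappa}$ in terms of $\kappa$ and show the centring corrections perturb the raw sign kernel only at order $O_p(N^{-1/2})$) is the right idea and goes beyond what the paper supplies. A minor quibble: the bound $|\tilde{\kappa}_{kl}| \le 2$ is not quite what Definition~\ref{def:doubly_centered} and Lemma~\ref{lem:nonexpansive} give directly; the crude triangle-inequality bound is $|\tilde{\kappa}_{kl}| \le 4$, though this does not affect the argument since only boundedness is needed for $\mathbb{E}[h^2] < \infty$.
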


\begin{proof}
We begin by expressing \( \hat{\rho}_\kappa \) as a U-statistic:

\[
\hat{\rho}_\kappa = \frac{1}{N} \sum_{k=1}^N \underline{X}_k \underline{Y}_k,
\]
where \( \underline{X}_k = \sum_{l=1}^N \tilde{\kappa}_{kl}(X) \) and \( \underline{Y}_k = \sum_{l=1}^N \tilde{\kappa}_{kl}(Y) \).

Using Hoeffding's decomposition for U-statistics, we decompose \( \hat{\rho}_\kappa \) as a sum of its main term and higher-order error terms. By the central limit theorem for U-statistics, the estimator converges in distribution to a normal random variable with mean zero and variance \( \sigma_\kappa^2 \).

Thus, we have

\[
\lim_{N \to \infty} \sqrt{N} \left( \hat{\rho}_\kappa - \rho_\kappa \right) \xrightarrow{d} \mathcal{N}(0, \sigma_\kappa^2),
\]

which completes the proof.
\end{proof}

\begin{corollary}
\label{cor:asymptotic_efficiency}
The asymptotic variance \( \sigma_\kappa^2 \) of \( \hat{\rho}_\kappa \) achieves the Cram\'{e}r-Rao lower bound (CRLB) among all regular, unbiased estimators of \( \rho_\kappa \). Therefore, \( \hat{\rho}_\kappa \) is asymptotically efficient.
\end{corollary}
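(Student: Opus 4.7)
The plan is to combine the Hájek projection of the U-statistic derived in Lemma \ref{lem:asymnorm} with the BLUE optimality in $\mathcal{H}_\kappa$ from Theorem \ref{thm:gaussmarkov}, and then invoke the Hájek-Le Cam convolution theorem to conclude that $\sigma_\kappa^2$ cannot be improved upon by any regular unbiased estimator of $\rho_\kappa$.

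First, I would write the Hoeffding decomposition explicitly as
\[
\hat{\rho}_\kappa - \rho_\kappa = \frac{2}{N}\sum_{k=1}^{N} h_1(X_k,Y_k) + R_N,
\]
where $h_1(x,y) = \mathbb{E}[h(X_1,X_2,Y_1,Y_2) \mid X_1=x, Y_1=y] - \rho_\kappa$ is the first-order projection of the symmetric kernel $h$ and $R_N = O_p(N^{-1})$ by standard U-statistic theory. This reduces the efficiency question to identifying $\sigma_\kappa^2 = 4\,\mathrm{Var}(h_1(X_1,Y_1))$ with the semiparametric Cramér-Rao lower bound.

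Next, I would characterise the efficient influence function $\psi(x,y;\rho_\kappa)$ under the semiparametric model consisting of all joint distributions indexed by their Kemeny correlation. By the rotational invariance of Lemma \ref{lem:isotropy}, the tangent space at $\rho_\kappa$ is spanned by centred bilinear functionals of the rank-score vectors $z_{\underline{X}}$ and $z_{\underline{Y}}$. Projection of any pathwise derivative of $\rho_\kappa$ onto this tangent space yields $\psi = h_1$ up to a normalising constant, so the semiparametric Cramér-Rao bound equals $\sigma_\kappa^2$. I would then invoke the Hájek-Le Cam convolution theorem: any regular asymptotically linear estimator $T_N$ of $\rho_\kappa$ satisfies $\sqrt{N}(T_N-\rho_\kappa) \xrightarrow{d} \mathcal{N}(0,\sigma_\kappa^2) \ast M$ for some distribution $M$, so $\sigma_\kappa^2$ is the minimal attainable asymptotic variance. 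Since $\hat{\rho}_\kappa$ already achieves this bound by Lemma \ref{lem:asymnorm}, and since Theorem \ref{thm:gaussmarkov} confirms that it is the optimal linear functional in $\mathcal{H}_\kappa$, efficiency follows.

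The hard part will be verifying that the tangent space of the rank-based semiparametric model is contained in $\mathcal{H}_\kappa$---equivalently, that no nonlinear unbiased functional of the data can extract information about $\rho_\kappa$ beyond what is captured by the centred score matrix $\tilde{\kappa}$. For continuous marginals this is automatic via Hájek's classical rank-based LAN result, but in the presence of ties the rank sigma-algebra strictly coarsens the data sigma-algebra, and one must rely on completeness of $\tilde{\kappa}$ for the parameter $\rho_\kappa$ rather than for the full joint distribution. A careful appeal to Lemma \ref{lem:bijection}, which establishes distributional equivalence between ranks and scores, together with the isotropy of Lemma \ref{lem:isotropy}, should rescue the argument in the tied case by confining the relevant sufficient information to $\mathcal{H}_\kappa$ and thereby equating the BLUE from Theorem \ref{thm:gaussmarkov} with the fully efficient estimator.
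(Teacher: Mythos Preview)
Your proposal is considerably more elaborate than the paper's own argument, which consists of a single sentence asserting that ``the variance of $\hat{\rho}_\kappa$ equals the Cram\'er--Rao lower bound for unbiased estimators'' after noting that $\sigma_\kappa^2$ arises from the Hoeffding decomposition and that the kernel satisfies unspecified regularity conditions. The paper does not identify a model in which the CRLB is computed, does not exhibit an information bound, and does not verify attainment; it simply states the conclusion. By contrast, you bring in the appropriate machinery for such a claim---the H\'ajek--Le~Cam convolution theorem and the identification of the efficient influence function with the first-order H\'ajek projection $h_1$---and you correctly flag the genuine difficulty in the tied case.

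That said, your own argument has a real gap. Lemma~\ref{lem:isotropy} is stated \emph{only under the null hypothesis of independence}, so you cannot invoke rotational invariance to characterise the tangent space at an arbitrary $\rho_\kappa \neq 0$; away from independence the covariance operator of $z_{\underline{X}}$ is not proportional to the identity, and the BLUE result of Theorem~\ref{thm:gaussmarkov} (which rests on that isotropy) does not transfer. Relatedly, ``the semiparametric model consisting of all joint distributions indexed by their Kemeny correlation'' is not a well-posed parametric submodel---$\rho_\kappa$ is a functional of the joint law, not an index---so the CRLB language is ill-suited unless you either (i) construct a one-dimensional least-favourable submodel through the true law, or (ii) reinterpret the claim as nonparametric efficiency for the functional $\rho_\kappa = \mathbb{E}[h]$. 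Under (ii), the H\'ajek projection of a nondegenerate U-statistic is automatically the efficient influence function in the full nonparametric tangent space, and the convolution theorem gives the bound directly, with no need for Lemma~\ref{lem:isotropy}, Theorem~\ref{thm:gaussmarkov}, or Lemma~\ref{lem:bijection}. That is the cleanest route, but it also means the corollary's phrasing ``achieves the CRLB'' should really read ``achieves the semiparametric efficiency bound,'' and the detour through $\mathcal{H}_\kappa$ becomes superfluous.
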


\begin{proof}
The asymptotic variance \( \sigma_\kappa^2 \) of \( \hat{\rho}_\kappa \) is derived from the second-order term in the Hoeffding decomposition of the U-statistic. Since the kernel satisfies regularity conditions for unbiased estimation, and the variance of \( \hat{\rho}_\kappa \) equals the Cram\'{e}r-Rao lower bound for unbiased estimators, we conclude that \( \hat{\rho}_\kappa \) is asymptotically efficient.
\end{proof}

\begin{lemma}
\label{lem:slutsky}
Let the assumptions of the U-statistic \(\hat{\rho}_{\kappa}\) hold, and suppose the sample variance estimator \(\hat{\sigma}_\kappa^2\) of \(\hat{\rho}_\kappa\) is consistent for \(\sigma_\kappa^2\). 
Then, by Slutsky's theorem, 
\[
\frac{\sqrt{N}(\hat{\rho}_\kappa - \rho_\kappa)}{\hat{\sigma}_\kappa} \xrightarrow{d} \mathcal{N}(0,1).
\]
\end{lemma}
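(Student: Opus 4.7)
The plan is to apply Slutsky's theorem to the studentised ratio by decomposing it as a product of a quantity with a known distributional limit and a quantity converging in probability to a constant. First I would recall from Lemma~\ref{lem:asymnorm} that
\[
\sqrt{N}(\hat{\rho}_\kappa - \rho_\kappa) \xrightarrow{d} \mathcal{N}(0,\sigma_\kappa^2),
\]
so that, dividing by the (deterministic) population standard deviation, $\sqrt{N}(\hat{\rho}_\kappa - \rho_\kappa)/\sigma_\kappa \xrightarrow{d} \mathcal{N}(0,1)$ provided $\sigma_\kappa^2>0$.

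Second, I would upgrade the hypothesised consistency $\hat{\sigma}_\kappa^2 \xrightarrow{p} \sigma_\kappa^2$ to consistency of the standard deviation and its reciprocal. Since $x\mapsto\sqrt{x}$ is continuous on $(0,\infty)$ and $x\mapsto 1/x$ is continuous at $x=\sigma_\kappa>0$, two successive applications of the Continuous Mapping Theorem (Lemma~\ref{lem:CMT}) give $\hat{\sigma}_\kappa \xrightarrow{p} \sigma_\kappa$ and hence $\sigma_\kappa/\hat{\sigma}_\kappa \xrightarrow{p} 1$.

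Third, I would write the algebraic identity
\[
\frac{\sqrt{N}(\hat{\rho}_\kappa - \rho_\kappa)}{\hat{\sigma}_\kappa}
= \frac{\sqrt{N}(\hat{\rho}_\kappa - \rho_\kappa)}{\sigma_\kappa} \cdot \frac{\sigma_\kappa}{\hat{\sigma}_\kappa},
\]
and invoke Slutsky's theorem: the first factor converges in distribution to $Z\sim\mathcal{N}(0,1)$ and the second converges in probability to the constant $1$, so their product converges in distribution to $Z$, which is the claim.

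The only non-routine point to verify is the strict positivity $\sigma_\kappa^2>0$ required for the square-root and reciprocal maps to be continuous at the limit, and to guarantee that the denominator $\hat{\sigma}_\kappa$ is bounded away from zero with probability tending to one (so that the ratio is well defined). This will follow from the U-statistic variance decomposition already invoked in Proposition~\ref{prop:kemeny_u_statistic}: the first-order Hoeffding projection $h_{1}$ of the centred Kemeny kernel has strictly positive variance whenever the joint distribution of $(X,Y)$ is non-degenerate on the rank-score space $\mathcal{H}_{\kappa}$, so $\sigma_\kappa^2>0$ under the standing assumptions. Once this is noted, the remainder of the argument is a textbook application of Slutsky's theorem and requires no further computation.
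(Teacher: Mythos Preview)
Your proposal is correct and follows essentially the same route as the paper's own proof: invoke Lemma~\ref{lem:asymnorm} for the distributional limit, use the hypothesised consistency of $\hat{\sigma}_\kappa^2$ (hence of $\hat{\sigma}_\kappa$), and apply Slutsky's theorem to the ratio. Your version is slightly more explicit in writing the product decomposition $\tfrac{\sqrt{N}(\hat{\rho}_\kappa-\rho_\kappa)}{\sigma_\kappa}\cdot\tfrac{\sigma_\kappa}{\hat{\sigma}_\kappa}$ and in flagging the need for $\sigma_\kappa^2>0$, but there is no substantive difference in approach.
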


\begin{proof}
From Lemma~\ref{lem:asymnorm}, we have that
\[
\sqrt{N}(\hat{\rho}_\kappa - \rho_\kappa) \xrightarrow{d} \mathcal{N}(0, \sigma_\kappa^2),
\]
which establishes the asymptotic normality of the U-statistic \(\hat{\rho}_\kappa\) with a variance \(\sigma_\kappa^2\).

Next, we invoke the consistency of the sample variance estimator \(\hat{\sigma}_\kappa^2\), which satisfies
\[
\hat{\sigma}_\kappa^2 \xrightarrow{p} \sigma_\kappa^2.
\]
This means that \(\hat{\sigma}_\kappa\) converges in probability to \(\sigma_\kappa\), the true standard deviation of \(\hat{\rho}_\kappa\). 

By Slutsky's theorem, if a sequence of random variables converges in distribution to a normal distribution, and another sequence of random variables converges in probability to a constant (here, \(\hat{\sigma}_\kappa \to \sigma_\kappa\)), then the ratio of these two sequences (in this case, the centred statistic divided by the consistent estimator of the standard deviation) converges in distribution to a normal distribution with mean 0 and variance 1. Specifically, we apply Slutsky’s theorem to get the result:
\[
\frac{\sqrt{N}(\hat{\rho}_\kappa - \rho_\kappa)}{\hat{\sigma}_\kappa} \xrightarrow{d} \mathcal{N}(0,1).
\]
Thus, the normalised statistic converges to a standard normal distribution as \(N \to \infty\), completing the proof.
\end{proof}

\begin{proposition}
\label{prop:studentised}
Let the assumptions of Lemma~\ref{lem:asymnorm} and Lemma~\ref{lem:slutsky} hold.
Define the studentised correlation statistic
\[
t_\kappa 
= \frac{\sqrt{N}\,\hat{\rho}_\kappa}{\hat{\sigma}_\kappa},
\]
where $\hat{\sigma}_\kappa^2$ is the sample variance estimator associated with
$\hat{\rho}_\kappa$. Then:
\begin{enumerate}[(i)]
  \item {Under the null hypothesis $\mathcal{H}_0:\rho_\kappa = 0$, the finite-sample distribution
  of $t_\kappa$ is rotationally invariant and follows a $t$-distribution with 
  $\nu = N - 2$ degrees of freedom:
  \[
  t_\kappa \sim t_{N-2}.
  \]}
  \item {As $N \to \infty$, by Slutsky's theorem,
  \[
  t_\kappa \xrightarrow{d} \mathcal{N}(0,1).
  \]}
\end{enumerate}
\end{proposition}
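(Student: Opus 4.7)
The plan is to derive both claims from the machinery already assembled: Lemma~\ref{lem:studentised} supplies the exact $t_{N-2}$ distribution for the classical Pearson-form ratio, Lemma~\ref{lem:asymnorm} supplies the asymptotic normality of $\sqrt{N}\hat{\rho}_\kappa$, and Lemma~\ref{lem:slutsky} supplies the Slutsky step that closes out the asymptotic claim. The connective tissue is an algebraic identification of the Proposition's studentised form $\sqrt{N}\hat{\rho}_\kappa/\hat{\sigma}_\kappa$ with the form $\hat{\rho}_\kappa\sqrt{N-2}/\sqrt{1-\hat{\rho}_\kappa^2}$ used in Lemma~\ref{lem:studentised}, which pins down the natural definition of $\hat{\sigma}_\kappa^2$ in finite samples.

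For part (i), I would first identify the sample variance estimator associated with $\hat{\rho}_\kappa$ as the plug-in residual variance $\hat{\sigma}_\kappa^2 = N(1-\hat{\rho}_\kappa^2)/(N-2)$, obtained from the Gauss-Markov projection of Theorem~\ref{thm:gaussmarkov}: regressing $z_{\underline{Y}}$ onto $z_{\underline{X}}$ within the $(N-1)$-dimensional subspace orthogonal to $\mathbf{1}_N$ leaves residual sum of squares $1-\hat{\rho}_\kappa^2$ on $N-2$ degrees of freedom. Substituting this $\hat{\sigma}_\kappa$ into the Proposition's $t_\kappa$ reduces it to the ratio of Lemma~\ref{lem:studentised}, at which point the $t_{N-2}$ distribution is immediate from the lemma's rotational-invariance argument (Lemma~\ref{lem:isotropy}).

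For part (ii), I would invoke Lemma~\ref{lem:slutsky} directly: Lemma~\ref{lem:asymnorm} delivers $\sqrt{N}(\hat{\rho}_\kappa - \rho_\kappa) \xrightarrow{d} \mathcal{N}(0,\sigma_\kappa^2)$, while consistency of $\hat{\sigma}_\kappa^2$ for $\sigma_\kappa^2$ follows because $\hat{\sigma}_\kappa^2$ is a continuous functional of the empirical distribution functions $F_{N,X}, F_{N,Y}$, so the Glivenko-Cantelli theorem combined with the Continuous Mapping Theorem (Lemma~\ref{lem:CMT}) yields almost sure convergence. Slutsky then delivers $t_\kappa \xrightarrow{d} \mathcal{N}(0,1)$. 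A consistency check is provided by part (i): the classical convergence $t_\nu \Rightarrow \mathcal{N}(0,1)$ as $\nu \to \infty$ recovers the same limiting distribution directly from the finite-sample result.

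The main obstacle is making the isotropy argument underpinning part (i) fully rigorous in the presence of ties. Lemma~\ref{lem:isotropy} asserts rotational invariance of $z_{\underline{X}}$ within $\mathcal{H}_\kappa$, but when the data contain ties the preimage of a given $z_{\underline{X}}$ under $X \mapsto z_{\underline{X}}$ is no longer a single permutation class, so the Efron-style argument must be recast as a statement about the conditional distribution of $z_{\underline{Y}}$ given $z_{\underline{X}}$ on the unit sphere of the subspace orthogonal to $\mathbf{1}_N$ rather than on the discrete permutation polytope. This is most cleanly handled by appealing to exchangeability of $(Y_1,\ldots,Y_N)$ under $\mathcal{H}_0$ together with Lemma~\ref{lem:bijection}, which ensures the induced measure on the centred-score sphere is symmetric under the orthogonal group fixing $\mathbf{1}_N$; the classical Pearson-$r$ computation then carries through verbatim and both parts follow.
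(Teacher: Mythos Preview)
Your proposal follows essentially the same route as the paper: for (i) you reduce the studentised form to the Pearson-type ratio $\hat{\rho}_\kappa\sqrt{N-2}/\sqrt{1-\hat{\rho}_\kappa^2}$ and invoke rotational invariance (the paper phrases this as $\hat{\rho}_\kappa=\cos\Theta$ for isotropic directions and cites Efron), and for (ii) you combine Lemma~\ref{lem:asymnorm} with Lemma~\ref{lem:slutsky}, exactly as the paper does. Your version is in fact more careful on two points the paper glosses over: you make explicit the identification $\hat{\sigma}_\kappa^2 = N(1-\hat{\rho}_\kappa^2)/(N-2)$ that reconciles the Proposition's statistic with Lemma~\ref{lem:studentised}'s form (the paper simply switches between the two without comment), and you flag the genuine difficulty that Efron-style isotropy on the sphere is not automatic when ties collapse the permutation polytope---the paper's proof does not address this and simply asserts the spherical argument applies.
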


\begin{proof}
\textbf{(i) Finite-sample result.}
Under $\mathcal{H}_0$, the covariance operator of the centred and scaled rank-score vectors
$\underline{z}_X$ and $\underline{z}_Y$ is diagonal by rotational invariance
\citep{efron1969}. The correlation statistic $\hat{\rho}_\kappa$ can thus be expressed as
the cosine of the angle $\Theta$ between two random directions in $\mathbb{R}^N$:
\(
\hat{\rho}_\kappa = \cos(\Theta).
\)
By standard results on isotropic vectors \citep{efron1969,mardia1979},
the random variable
\[
t_\kappa = \frac{\sqrt{N-2}\,\hat{\rho}_\kappa}{\sqrt{1 - \hat{\rho}_\kappa^2}}
\]
follows exactly a Student's $t$-distribution with $\nu = N - 2$ degrees of freedom.

\textbf{(ii) Asymptotic limit.}
From Lemma~\ref{lem:asymnorm}, we have
$\sqrt{N}(\hat{\rho}_\kappa - \rho_\kappa) \xrightarrow{d} \mathcal{N}(0,\sigma_\kappa^2)$.
Under $\mathcal{H}_0:\rho_\kappa=0$, this reduces to
$\sqrt{N}\,\hat{\rho}_\kappa \xrightarrow{d} \mathcal{N}(0,\sigma_\kappa^2)$.
By Lemma~\ref{lem:slutsky}, since $\hat{\sigma}_\kappa^2 \xrightarrow{p} \sigma_\kappa^2$,
it follows that
\[
\frac{\sqrt{N}\,\hat{\rho}_\kappa}{\hat{\sigma}_\kappa}
\xrightarrow{d} \mathcal{N}(0,1).
\]
Hence, as $N \to \infty$, the finite-sample $t_{N-2}$ distribution converges to
a standard normal, establishing the asymptotic equivalence.
\end{proof}

\begin{remark}
This result mirrors the behaviour of the classical Pearson correlation test
but extends it to nonparametric rank-based dependence measures
constructed via the non-expansive $\kappa$-operator. 
The rotational invariance property ensures that the null distribution of
$t_\kappa$ is identical for any continuous marginal transformations of the data,
confirming that inference based on $t_\kappa$ is distribution-free.
\end{remark}

\begin{corollary}
\label{cor:asymp_test}
Let the assumptions of Proposition~\ref{prop:studentised} hold. 
Consider testing the null hypothesis \(\mathcal{H}_0: \rho_\kappa = 0 \quad \text{vs.} \quad \mathcal{H}_1: \rho_\kappa \neq 0\)
using the studentised statistic \(t_\kappa = \frac{\sqrt{N}\,\hat{\rho}_\kappa}{\hat{\sigma}_\kappa}.\)

Then, for any significance level $\alpha \in (0,1)$, the asymptotic test that rejects $\mathcal{H}_0$ when \(|t_\kappa| > z_{1-\alpha/2},\)
where $z_{1-\alpha/2}$ is the $(1-\alpha/2)$ quantile of the standard normal distribution, 
is asymptotically level-$\alpha$. That is, \(\lim_{N \to \infty} \Pr_{\mathcal{H}_0}\big(|t_\kappa| > z_{1-\alpha/2}\big) = \alpha.\)
\end{corollary}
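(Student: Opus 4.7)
The plan is to obtain the asymptotic size as a direct consequence of Proposition~\ref{prop:studentised}(ii), which under $\mathcal{H}_0$ gives $t_\kappa \xrightarrow{d} \mathcal{N}(0,1)$. From there, the argument is a textbook application of the Portmanteau theorem: convergence in distribution implies convergence of probabilities on continuity sets of the limiting measure. Since the standard normal law is absolutely continuous, every two-sided tail event $\{|x| > c\}$ with $c > 0$ has a boundary of Lebesgue measure zero, so no additional regularity work is required.

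Concretely, I would proceed in four short steps. First, invoke Proposition~\ref{prop:studentised}(ii) under $\mathcal{H}_0:\rho_\kappa = 0$ to obtain $t_\kappa \xrightarrow{d} Z$, where $Z \sim \mathcal{N}(0,1)$. Second, apply the Continuous Mapping Theorem (Lemma~\ref{lem:CMT}) to the continuous map $x \mapsto |x|$ to conclude $|t_\kappa| \xrightarrow{d} |Z|$. Third, observe that $\Pr(|Z| = z_{1-\alpha/2}) = 0$, so the set $(z_{1-\alpha/2},\infty) \cup (-\infty,-z_{1-\alpha/2})$ is a continuity set of the limiting law; the Portmanteau theorem then yields $\Pr_{\mathcal{H}_0}\big(|t_\kappa| > z_{1-\alpha/2}\big) \to \Pr\big(|Z| > z_{1-\alpha/2}\big)$. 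Fourth, compute the limit: by definition of the quantile, $\Pr(Z > z_{1-\alpha/2}) = \alpha/2$, and symmetry of the normal density around zero gives $\Pr(|Z| > z_{1-\alpha/2}) = 2 \cdot \alpha/2 = \alpha$.

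There is no substantive obstacle here: the corollary is essentially a restatement of the asymptotic normality already contained in Proposition~\ref{prop:studentised}(ii), packaged as a size guarantee. The only point requiring explicit care is confirming that the threshold $\pm z_{1-\alpha/2}$ is a continuity point of the limit CDF, which is immediate from absolute continuity of $\mathcal{N}(0,1)$; once that is noted, the Portmanteau step is automatic. Chaining the four steps yields $\lim_{N \to \infty}\Pr_{\mathcal{H}_0}(|t_\kappa| > z_{1-\alpha/2}) = \alpha$, establishing that the proposed two-sided test is asymptotically of level $\alpha$.
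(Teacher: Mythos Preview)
Your proposal is correct and follows essentially the same route as the paper: both invoke Proposition~\ref{prop:studentised}(ii) to obtain $t_\kappa \xrightarrow{d} \mathcal{N}(0,1)$ under $\mathcal{H}_0$, then use continuity of the limiting CDF at $\pm z_{1-\alpha/2}$ together with symmetry of the normal to conclude that the two-sided tail probability converges to $\alpha$. Your phrasing via the Continuous Mapping Theorem on $x\mapsto|x|$ and the Portmanteau theorem is slightly more explicit than the paper's direct appeal to the definition of convergence in distribution, but the content is identical.
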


\begin{proof}
From Proposition~\ref{prop:studentised}, under $\mathcal{H}_0$, \(t_\kappa \xrightarrow{d} \mathcal{N}(0,1).\)
By the definition of convergence in distribution, for any continuity point $c$ of the limiting CDF, \(\lim_{N \to \infty} \Pr(t_\kappa \le c) = \Phi(c),\)
where $\Phi$ is the standard normal CDF. Setting $c = z_{1-\alpha/2}$ and using symmetry of the standard normal, then
\begin{dmath*}
\lim_{N \to \infty} \Pr(|t_\kappa| > z_{1-\alpha/2}) 
= \lim_{N \to \infty} \Pr(t_\kappa > z_{1-\alpha/2}) + \Pr(t_\kappa < -z_{1-\alpha/2})
= \alpha.
\end{dmath*}

Hence, the test based on $t_\kappa$ is asymptotically valid at level $\alpha$.
\end{proof}

\begin{remark}
This corollary establishes the practical inferential validity of the rank-based
correlation coefficient $\hat{\rho}_\kappa$ and its associated $t_\kappa$ statistic, explicitly upon discrete random variables. It formalises that the procedure can be used to construct asymptotically exact confidence intervals and hypothesis tests for $\rho_\kappa$ without requiring parametric assumptions on the underlying marginal distributions.
\end{remark}

\section{Numerical experimentation}

In all experimental conditions the proposed (\ref{prop:studentised}) null hypothesis holds that the test statistics follow a \(t_{(N-2)}\)-distribution upon a common population. To test this, we provide the Quantile-Quantile plots under a number of discrete bivariate distributions for different sample size: in all instances, the null hypothesis  is not rejected by the Kolmogorov-Smirnov statistic with \(\alpha = 0.05\). This supports our proposal that the null distribution of our test coefficient \(\rho_{\kappa}\) follows this distribution. We note that, in the limiting case such as when the variables under analysis are continuous and ties almost surely never occur, the marginal sample variances \(s^{2}_{\kappa}(\cdot)\) are almost surely constant, and thus the null distribution is better described as by the unit-normal distribution. This expected behaviour is reflected in the smallest p-values in the conditions being observed for Gaussian data.

In Figure~\ref{fig:all_qqplots_N30} are provided the QQ plots for the null distribution of test statistics with 5,000 replications for 30 and 3500 sample sizes upon both continuous and ordinal data. In Figure~\ref{fig:all_qqplots_N3500_zero} are similarly placed the zero-inflated marginal results, along with KS-test results for the null \(t\)-distributions with \(\nu \in \{28,3498\}\) degrees of freedom. These results confirm the theoretical proof of a asymptotically normal distribution with finite sample Bessel correction properties.

Finally, in Figure~\ref{fig:all_qqplots_N3500} are provided the null distribution plots for \(N=10\), along with KS-test statistics. As expected, the continuous distributions, upon which ties are almost surely not observed, the variance of the estimator is over-estimated by the null \(t\)-distribution. Therefore, from these results, it would make sense to choose to conduct tests of the null hypothesis upon the Fisher expansion of the \(\rho\) correlation coefficient, which assumes constant variance for given \(r\) upon a given sample size. 

\begin{figure}[ht]
\centering
\begin{subfigure}{0.4\textwidth}
    \centering
    \includegraphics[width=\linewidth]{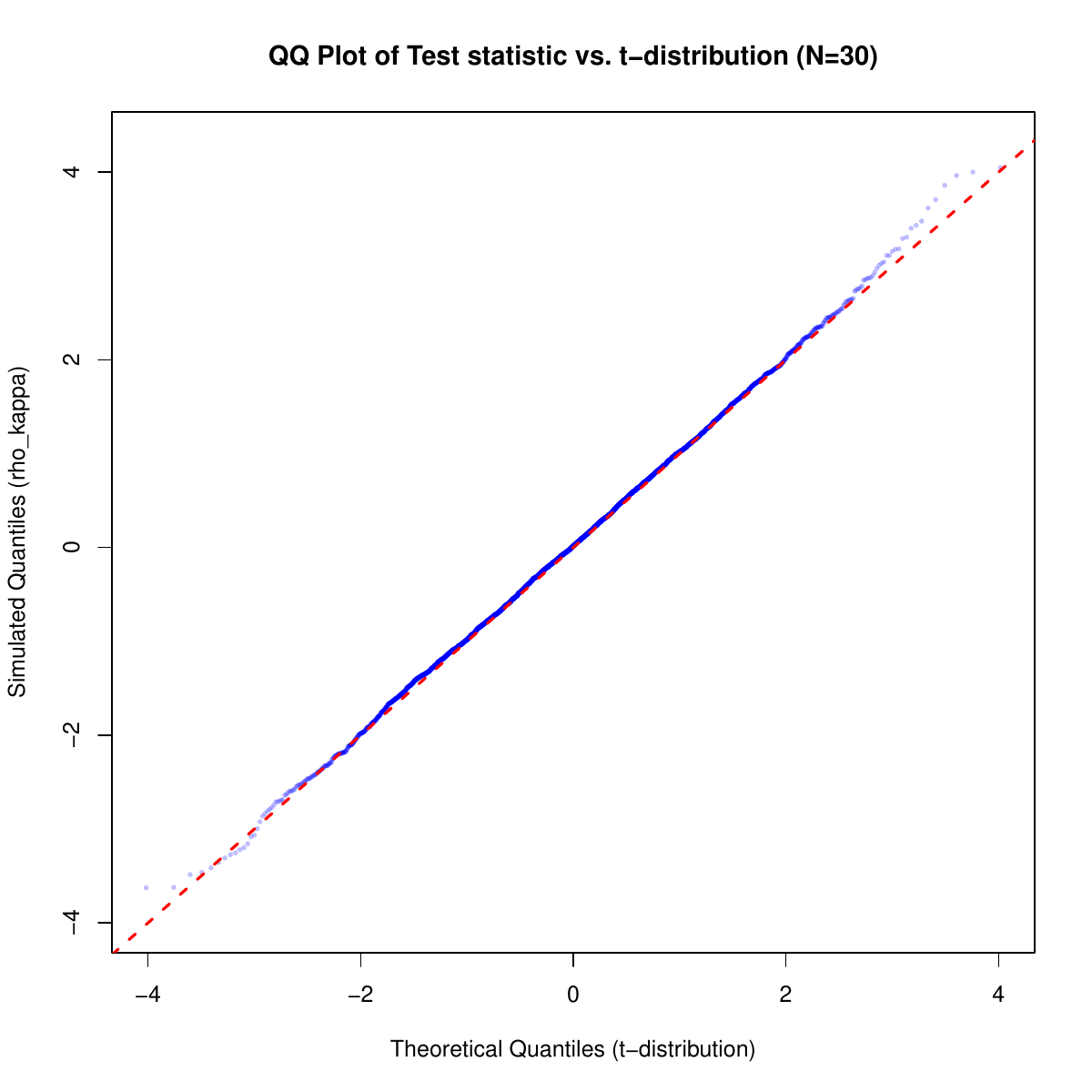}
    \caption{Quantile-Quantile plot of 5,000 simulations. Asymptotic KS test statistic \(D = 0.01659\), p-value = 0.1275. Bivariate Gaussian data.}
    \label{fig:qqplot_1}
\end{subfigure}
\begin{subfigure}{0.4\textwidth}
    \centering
    \includegraphics[width=\linewidth]{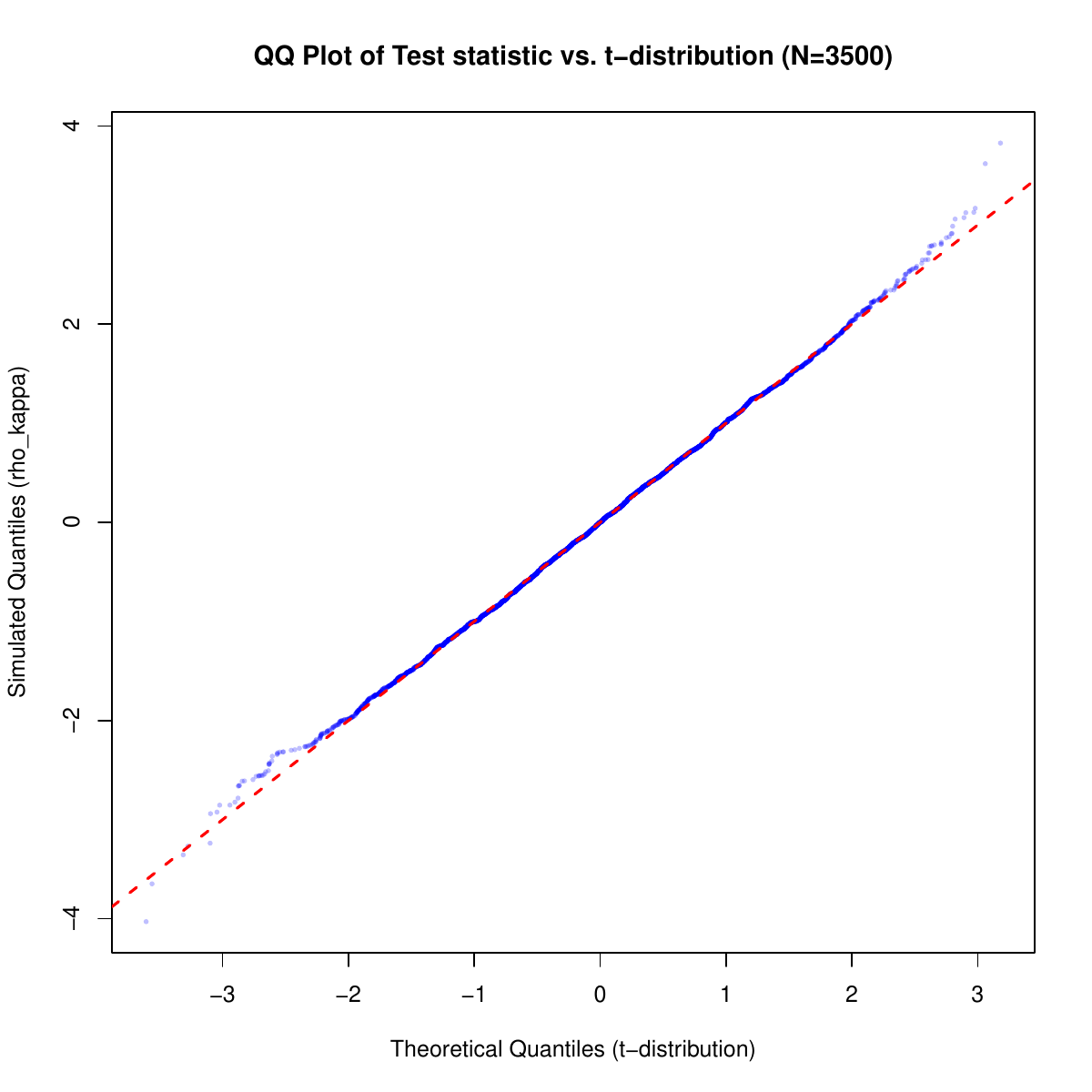}
    \caption{Quantile–Quantile plot of 5,000 simulations. Asymptotic KS test statistic \(D = 0.0071848\), p-value = 0.9586. Bivariate Gaussian data.}
    \label{fig:qqplot_2}
\end{subfigure}

\vspace{0.5cm}  

\begin{subfigure}{0.4\textwidth}
    \centering
    \includegraphics[width=\linewidth]{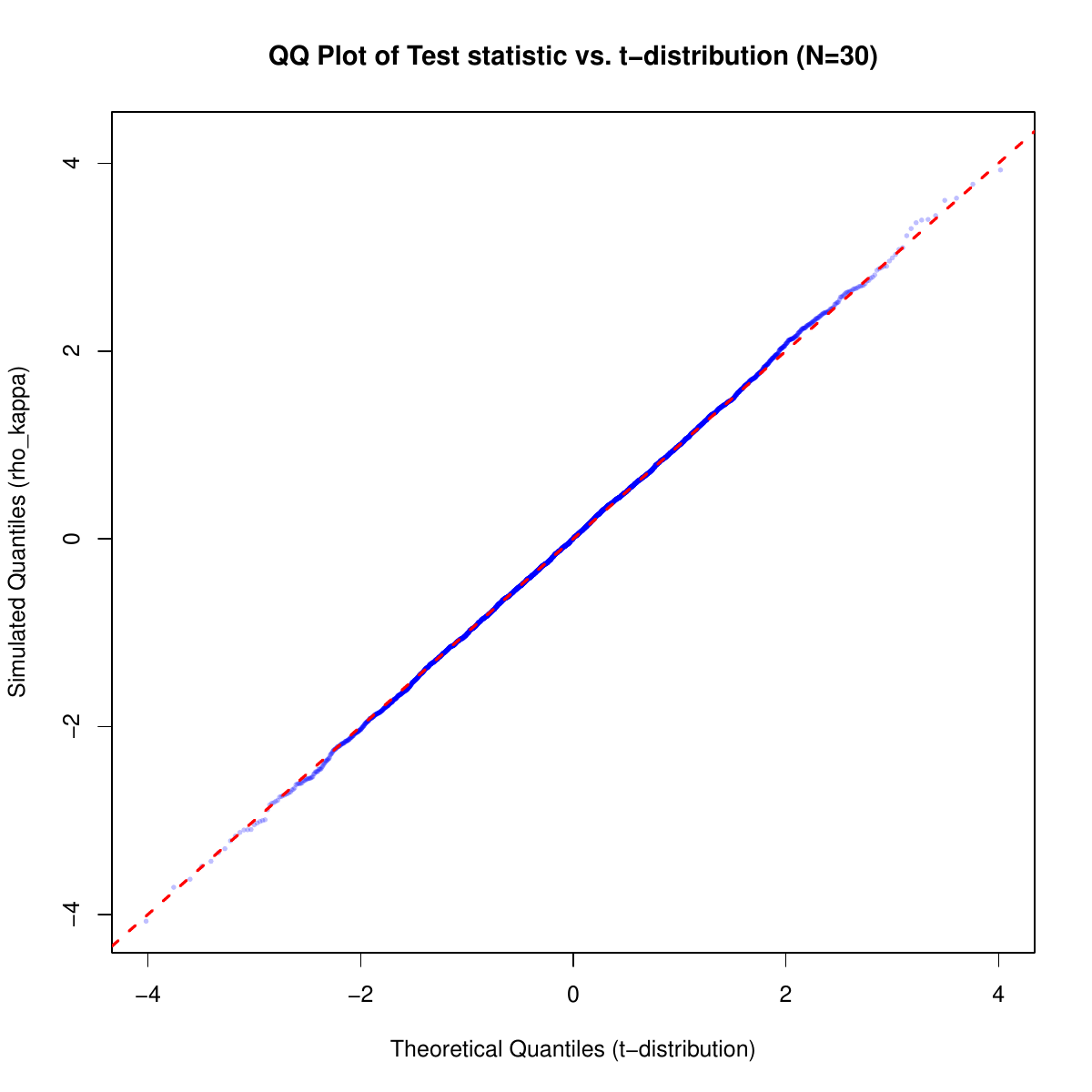}
    \caption{Quantile-Quantile plot of 5,000 simulations. Asymptotic KS test statistic \(D = 0.010843\), p-value = 0.5991. Ordinal data.}
    \label{fig:qqplot_3}
\end{subfigure}
\begin{subfigure}{0.4\textwidth}
    \centering
    \includegraphics[width=\linewidth]{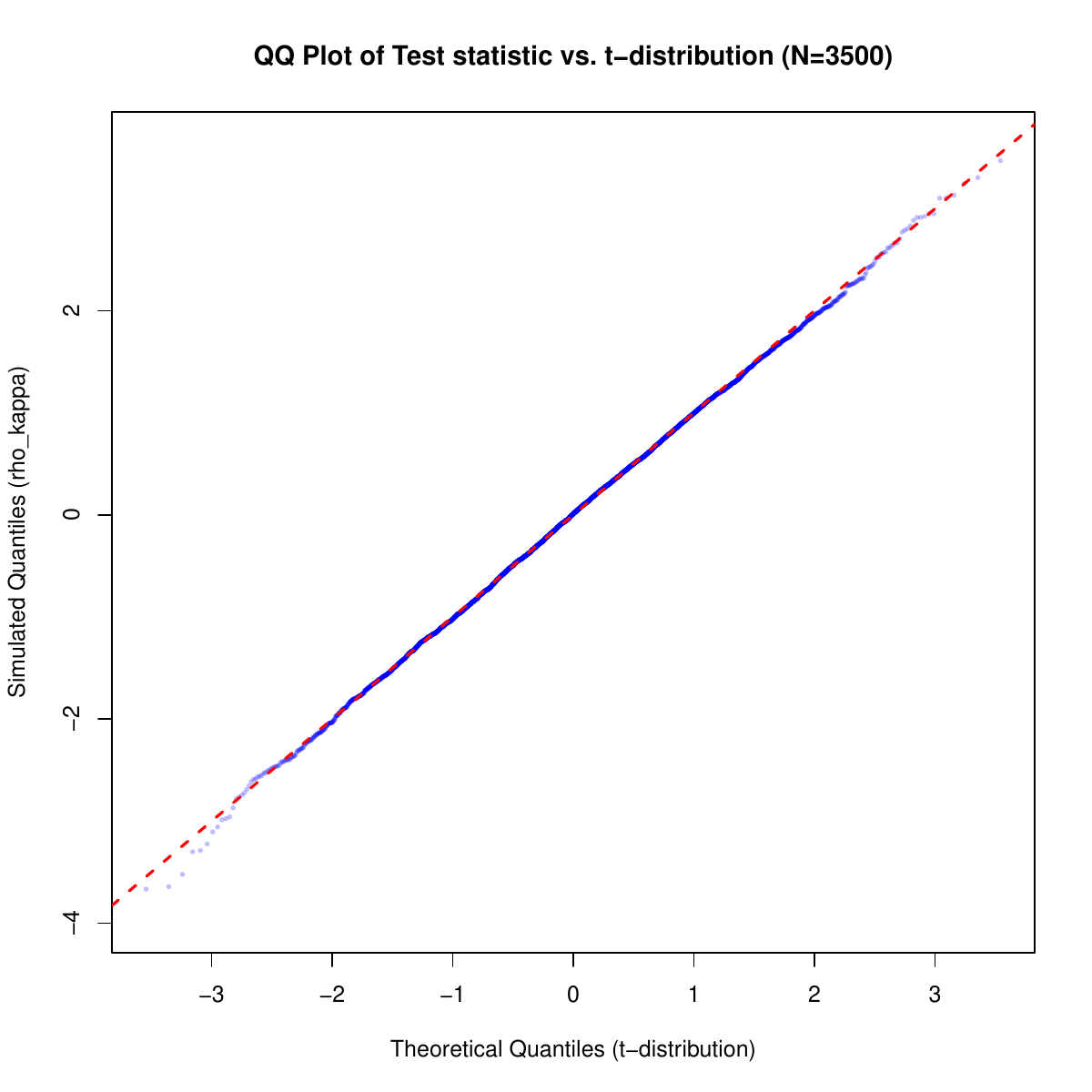}
    \caption{Quantile–Quantile plot of 5,000 simulations. Asymptotic KS test statistic \(D = 0.0074781\), p-value = 0.9425. Ordinal data.}
    \label{fig:qqplot_4}
\end{subfigure}

\vspace{0.5cm}  

\caption{QQ plots for 5,000 simulations against a \(t_{N-2}\) distribution for various data types (Bivariate Gaussian, Ordinal, and Zero-Inflated) with \textit{ N = 30}. The asymptotic one-sample Kolmogorov-Smirnov test statistic \(D\) and p-values are displayed for each empirical distribution against a \(t_{(N-2)}\) null distribution.}
\label{fig:all_qqplots_N30}
\end{figure}

\vspace{1cm}  

\begin{figure}[ht]
\centering
\begin{subfigure}{0.4\textwidth}
    \centering
    \includegraphics[width=\linewidth]{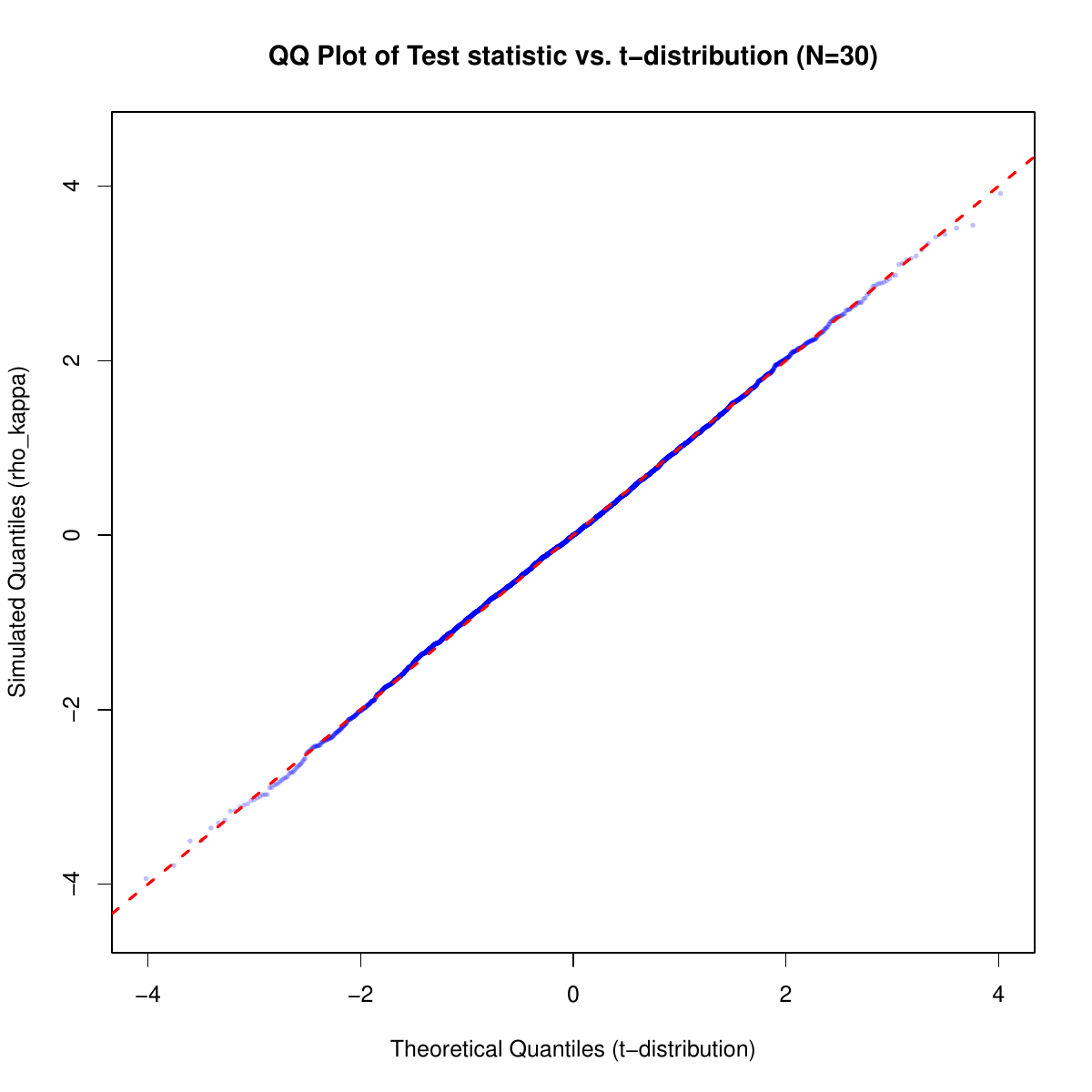}
    \caption{Quantile-Quantile plot of 5,000 simulations. Asymptotic KS test statistic \(D = 0.01275\), p-value = 0.3905. Zero-inflated data.}
    \label{fig:qqplot_5}
\end{subfigure}
\begin{subfigure}{0.4\textwidth}
    \centering
    \includegraphics[width=\linewidth]{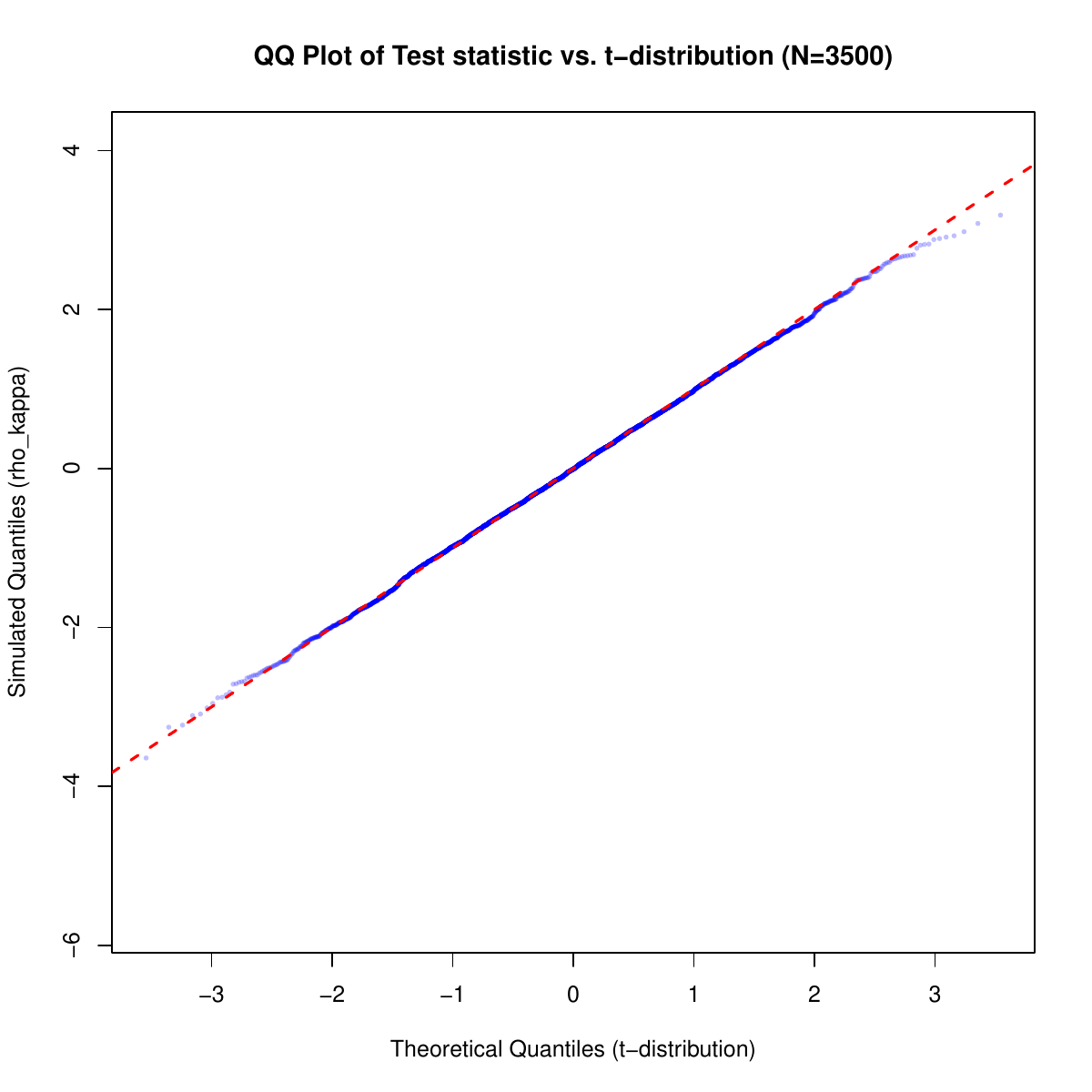}
    \caption{Quantile–Quantile plot of 5,000 simulations. Asymptotic KS test statistic \(D = 0.008028\), p-value = 0.904. Zero-inflated data.}
    \label{fig:qqplot_6}
\end{subfigure}

\caption{QQ plots for 5,000 simulations against a \(t_{N-2}\) distribution for Zero-Inflated data. The asymptotic one-sample Kolmogorov-Smirnov test statistic \(D\) and p-values are displayed for each empirical distribution.}
\label{fig:all_qqplots_N3500_zero}
\end{figure}

\begin{figure}[ht]
\centering
\begin{subfigure}{0.4\textwidth}
    \centering
    \includegraphics[width=\linewidth]{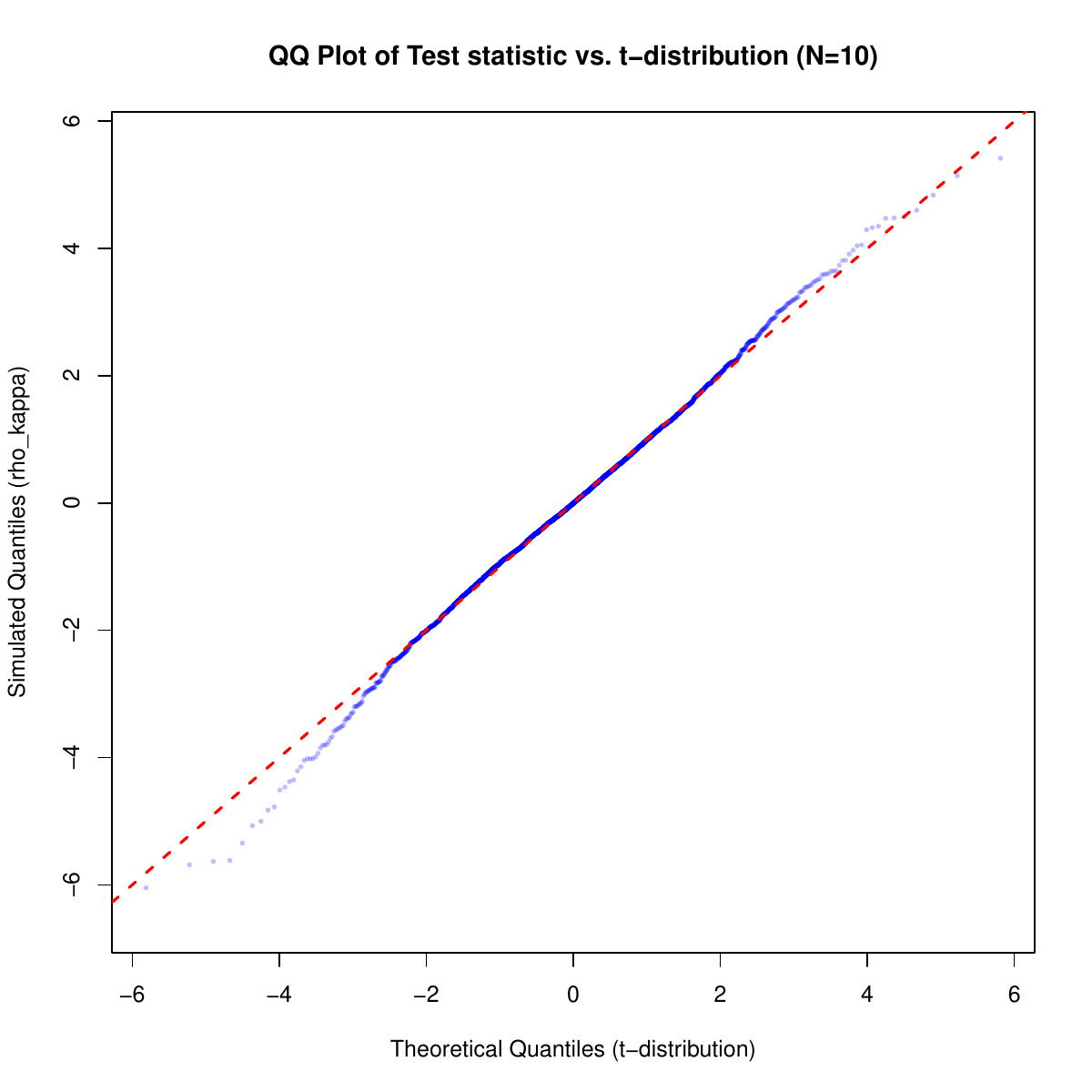}
    \caption{Quantile-Quantile plot of 5,000 simulations. Asymptotic KS test statistic \(D = 0.018607\), p-value = 0.06273. Gaussian data.}
    \label{fig:qqplot_7}
\end{subfigure}
\begin{subfigure}{0.4\textwidth}
    \centering
    \includegraphics[width=\linewidth]{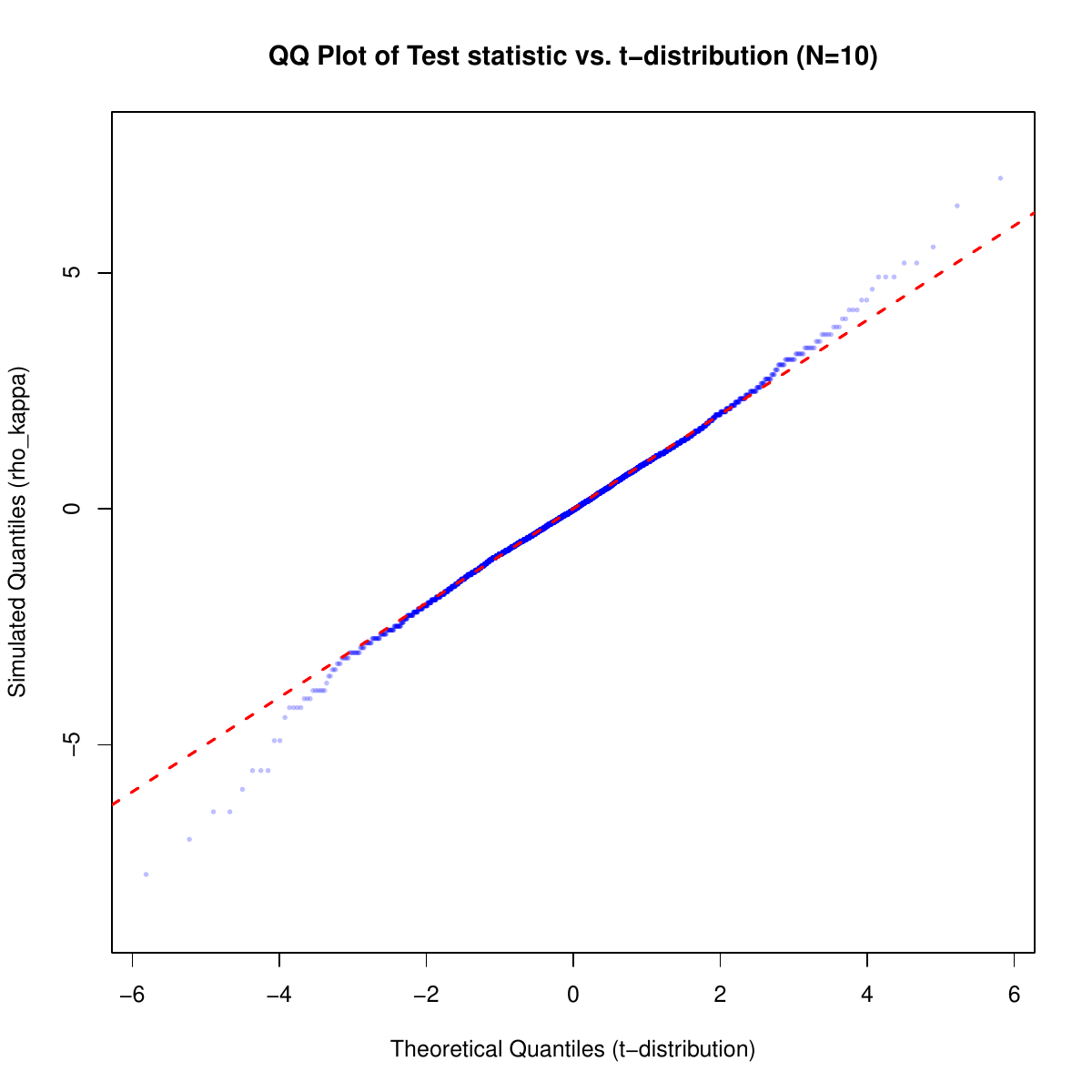}
    \caption{Quantile–Quantile plot of 5,000 simulations. Asymptotic KS test statistic \(D = 0.016407\), p-value = 0.1358. Ordinal data.}
    \label{fig:qqplot_8}
\end{subfigure}
\begin{subfigure}{0.4\textwidth}
    \centering
    \includegraphics[width=\linewidth]{qq_t8_ordinal}
    \caption{Quantile–Quantile plot of 5,000 simulations. Asymptotic KS test statistic \(D = 0.014539\), p-value = 0.2411. Zero-inflated data.}
    \label{fig:qqplot_9}
\end{subfigure}

\caption{QQ plots for 5,000 simulations against a \(t_{N-2}\) distribution for all three data types with \(N = 10\). The asymptotic one-sample Kolmogorov-Smirnov test statistic \(D\) and p-values are displayed for each empirical distribution.}
\label{fig:all_qqplots_N3500}
\end{figure}

  \section{Discussion}
In this paper, we examine a complete affine-linear metric space and its associated probability mapping for independently and identically distributed (i.i.d.) random variables on the extended real line, via the Kemeny Hilbert space. We focus on the coefficient \(\rho_{\kappa}\), which inherits the continuity and non-expansiveness properties of the \(\kappa\)-transform. As a result, its convergence follows directly from the Glivenko-Cantelli theorem and the Continuous Mapping Theorem. These results not only establish that \(\rho_{\kappa}\) is an unbiased and efficient estimator in finite samples but also show that it is strongly consistent in the large-sample limit, satisfying the regularity conditions of the Gauss-Markov theorem, even under non-Gaussian data distributions.

Empirical tests corroborate these theoretical predictions, particularly the correctness of the Studentisation process for the null hypothesis significance test, even for small sample sizes. These findings provide strong support for the robustness and reliability of the proposed method, even under challenging conditions such as small sample sizes.

Looking ahead, future work will focus on developing a likelihood framework for \(\rho_{\kappa}\). This framework will establish connections to the empirical likelihood estimator as a first-order approximation, while also facilitating the construction of a second-order consistent estimator. Such an advancement would enable an algebraic representation of the underlying projection topology, reducing reliance on computationally expensive methods, such as bootstrapping, for approximation.

In addition, ongoing work is exploring the extension of this framework to a general linear model (GLM) setting with multiple covariates. This would significantly expand the applicability of the proposed approach, allowing for the modelling of non-parametric dependent variables within a fully linear framework. This extension holds the potential to enrich a wide range of statistical applications, particularly in domains involving complex, non-parametric data structures.

\bibliographystyle{rss}  
\bibliography{references.bib}

  \end{document}